\documentclass[11pt]{article}
\usepackage[T1]{fontenc}
\usepackage[utf8]{inputenc}
\usepackage{lmodern}
\usepackage{geometry}
\usepackage{amsmath, amssymb}
\usepackage{titlesec}
\usepackage{setspace}
\usepackage{amsthm}
\usepackage{enumitem}
\usepackage{hyperref}

\titlespacing*{\section}{0pt}{2.5ex plus 0.8ex minus 0.3ex}{1.5ex plus 0.3ex}

\renewenvironment{thebibliography}[1]%
  {\fontsize{10}{11.5}\selectfont\begin{oldthebibliography}{#1}%
   \setlength{\itemsep}{0pt}%
   \setlength{\parskip}{0pt}%
   \setlength{\parsep}{0pt}}%
  {\end{oldthebibliography}}

\newtheorem{theorem}{Theorem}
\newtheorem{proposition}[theorem]{Proposition}
\newtheorem{lemma}[theorem]{Lemma}

\theoremstyle{definition}
\newtheorem{definition}[theorem]{Definition}
\theoremstyle{remark}
\newtheorem{remark}[theorem]{Remark}

\newcommand{\dbci}{DBCI}
\newcommand{\pfwd}{p_{\mathrm{fwd}}}
\newcommand{\pbwd}{p_{\mathrm{bwd}}}
\newcommand{\pdbci}{p_{\dbci}}

\title{Dual Boundary Condition Inference:\\
       A Two-Boundary Product Rule and Its Implications}
\author{%
  Luis Razo\thanks{European Institute of Science in Management (EISM), Barcelona, Spain}%
  \and
  Eliahu Cohen\thanks{Faculty of Engineering and the Institute of Nanotechnology and Advanced Materials, Bar-Ilan University, Israel}%
}
\date{}

\begin{document}
\maketitle

\emergencystretch=2em
\begin{abstract}
Many inference problems are constrained from two sides: forward information pointing toward an answer, and a second constraint on the outcomes. The rule studied here, \emph{Dual Boundary Condition Inference} (\dbci{}), multiplies them outcome by outcome and renormalises. The rule is not new: it is the product-of-experts form and the unit-exponent member of the logarithmic-pooling family. The paper contributes a classification and reducibility characterisation. The algebra does not fix the exponent at which the inputs combine: given the statistic $\log\pfwd+\log\pbwd$, which already builds in the product form, maximum entropy supplies the family $(\pfwd\pbwd)^{\lambda}$ and selects no member. If the inputs are read as two equally weighted opinions, three requirements select $\lambda=\tfrac12$: unanimity preservation, consistency under a common Bayesian update, and minimal symmetric Kullback--Leibler divergence to the inputs. If they are read as separately applied factors, two requirements select $\lambda=1$: a neutral second input must leave the first unchanged, and an update to one input must pass through unchanged. For a rank-one intermediate projective measurement, the Aharonov--Bergmann--Lebowitz (ABL) rule realises the unit-exponent factor form without an additional exponent parameter. Each reading is thus supported in the same three ways: by what it means for an input to add nothing, by how new information enters, and by an independent derivation. Every positive exponent orders outcomes identically, so the choice is invisible to picking the most probable outcome, but not to scoring a class by its total mass. Under the relevant identifications, the same product form gives Bayes' rule, while a final effect proportional to the identity returns the ordinary forward Born probabilities. The second result is a \emph{Two-Boundary Reducibility Criterion}: \dbci{} factors through the forward boundary precisely when the effective backward boundary does, after weights differing only by a positive factor are identified. Thus a fixed prior or model-fixed structural constraint adds no distinctions between cases sharing the same forward boundary, though it may encode substantial information and still materially affect the result. The resulting rule is substrate-independent.
\end{abstract}

\section{Introduction}
\label{sec:introduction}

Most inference problems have one boundary. Bayes' rule combines a likelihood with a prior; maximum-likelihood estimation maximises a single density; the Born rule maps a pre-selected quantum state to outcome probabilities. In each case, once the model specification is fixed, a single case-varying input determines the distribution over outcomes, and the rule for using it is unambiguous. A prior is not thereby empty of information; it simply does not vary with the case, a distinction made precise in \S\ref{sec:scope}.

Other problems are two-sided in form: two separately specified constraints bear on the same outcome space, and the task is to combine them. (Whether the second constraint supplies information the first does not is a further question, addressed in \S\ref{sec:scope}; the examples below are two-sided in specification, not necessarily in informational content.) In post-selected quantum mechanics, the system is conditioned both on a pre-selected initial state and on a post-selected final measurement outcome; the rule for the probability of an intermediate measurement is the ABL rule of Aharonov, Bergmann, and Lebowitz~\cite{AharonovBergmannLebowitz1964}. In stabiliser quantum error correction, candidate error patterns are constrained both by an observed syndrome (forward evidence) and by the requirement that surviving configurations lie in the code subspace (backward structural constraint). In generative modelling and posterior inference, the latent-variable distribution is shaped simultaneously by data and by a fixed prior or regulariser. Each of these is a setting in which two probability distributions over a shared outcome space must be combined into one.

The rule for that combination is the multiplicative form
\begin{equation*}
\pdbci(x) \;=\; \frac{\pfwd(x)\,\pbwd(x)}{Z}, \qquad Z \;=\; \sum_{x'} \pfwd(x')\,\pbwd(x'),
\end{equation*}
which we call the \emph{Dual Boundary Condition Inference} (\dbci{}) operator. The thesis of the paper concerns the exponent. Given the statistic $\log\pfwd+\log\pbwd$, which already builds in the product form, maximum entropy supplies a one-parameter family of combinations, $(\pfwd\,\pbwd)^{\lambda}$, and does not select a member; what fixes $\lambda$ is a reading of what the two inputs are, and the three arguments below stand in the following relation to that question:

\begin{itemize}[noitemsep,topsep=2pt]
  \item \textbf{Quantum-mechanical.} For a rank-one intermediate projective measurement the ABL rule~\cite{AharonovBergmannLebowitz1964} is already this product, written out in that form in~\cite{AharonovCohenGrussLandsberger2014}. What the paper takes from it is not the identity but the selection: it fixes the exponent at one, with no further parameter once the inputs are identified as Born weights (\S\ref{sec:abl}, \S\ref{sec:unification}).
  \item \textbf{Information-theoretic.} Among all distributions on $X$ satisfying normalisation and a single moment constraint on the joint log-overlap $\log\pfwd + \log\pbwd$, the maximum-entropy solutions form the one-parameter family $(\pfwd\,\pbwd)^{\lambda}/Z_{\lambda}$. This argument supplies the family and selects no member of it: $\lambda$ is fixed only once the moment value is prescribed, which the argument does not supply (\S\ref{sec:maxent}).
  \item \textbf{Decision-theoretic.} The distribution closest in symmetric Kullback--Leibler divergence to both boundaries is the normalised geometric mean $\sqrt{\pfwd\pbwd}/Z'$. That is the $\lambda = \tfrac{1}{2}$ member, so this argument selects a different one, and the two selections disagree along the line \S\ref{sec:unification} makes precise (\S\ref{sec:kl}, \S\ref{sec:unification}).
\end{itemize}

The three are not three independent claims of correctness, nor three routes to one destination. One supplies the family; the other two select different members of it, and disagree. What that disagreement is, and why it is invisible to the comparison ordinarily made, is made precise in \S\ref{sec:unification}.

Beyond the one-shot combination, the operator carries structural properties under iteration. Fixing the backward boundary and applying it repeatedly to an evolving forward boundary defines a map $T$ on the simplex of distributions on $X$ (\S\ref{sec:properties}). Two iteration-level properties follow: for strictly positive $\pbwd$, the fixed points of $T$ are exactly the distributions on whose support $\pbwd$ is constant (\S\ref{sec:fixed-point-char}); and iteration concentrates mass on configurations of equal, minimum-available backward energy (\S\ref{sec:fixed-point}).

Familiar single-boundary rules are recovered as limits. When the backward boundary is uniform, the operator reduces to the forward boundary alone: the Born rule if the forward boundary is a squared amplitude, the likelihood rule if it is a data likelihood. Reading the backward boundary as a prior turns the operator into Bayes' rule, and taking logarithms turns it into a free energy whose total is the sum of a forward and a backward energy term. These limits, together with the operator's relationship to the two-state vector formalism~\cite{CohenAharonov2017,AharonovCohenLandsberger2017} and to weak measurement, are developed in \S\ref{sec:discussion}.

A word on where this form comes from. What the product algebra alone decides is neither whether the second factor varies in a way the first does not already determine, nor the exponent at which the two combine. Post-selected quantum mechanics bears on both, in different ways. A final condition can be specified independently of the initial one without redundancy or contradiction, even when the forward state is completely specified, subject only to the post-selection having nonzero probability, the condition $Z > 0$ appearing in the operator itself; under fixed deterministic dynamics there is no such freedom, and a final condition is either redundant or inconsistent~\cite{AharonovCohenGrussLandsberger2014}. The contrast is with determinism rather than with classicality, since classical smoothing~\cite{RauchTungStriebel1965} and Schr\"{o}dinger bridge~\cite{Leonard2014} problems also supply independent second boundaries; \S\ref{sec:scope} gives the criterion that decides which case obtains in a given family.

The exponent stands differently. It is not left open once the inputs have been named: \S\ref{sec:unification} shows that naming them is what closes it, since taking the two weights as separately applied factors and taking them as two equally weighted opinions force different members by elementary requirements that mention no quantum mechanics (Remark~\ref{rem:product-rules}). What is left open is which reading obtains, and that is a question about the setting rather than about the rule. Post-selection supplies a physical setting in which the factor reading is natural rather than stipulated: the two boundaries are fixed at distinct times by distinct procedures, neither determining the other. That places the setting on the factor side of the distinction; it does not show that quantum mechanics forces the reading. The claim is the distinctness of the two specifications, not a symmetry between them: a post-selected effect is not prepared in the sense a pre-selected state is, and the argument does not require that it be. The rank-one ABL reduction then returns unit exponents with no free parameter, once the two inputs are identified as Born weights of that pair. Any application must supply that identification, but it is made before any combination rule is in view. That agreement, between a physical derivation and requirements stated without reference to physics, and not the multiplication, is what this paper takes from the setting.

What we take is therefore the structure and not the physics. Once the operator is stated the quantum machinery can be discarded, and what remains is available wherever two boundary weights are separately specified on a shared outcome space. Whether that extraction is worth making is not settled by the algebra, which is standard. Feynman's observation applies directly: when two formulations yield exactly the same consequences there is no scientific ground for preferring either, and the reason to hold both is that each suggests different ideas about what to change next~\cite[ch.~7]{Feynman1965}. The two-boundary reading claims no consequence the normalised product does not already have. What it offers is a different set of questions, and it is answerable for whether those questions lead anywhere.

The contribution is a classification and a test. The classification: the exponent at which two boundary weights combine is not fixed by the algebra but by what the inputs are taken to be --- separately applied factors, or two equally weighted opinions about the same outcome --- and each reading is supported in the same three ways: an identity property, a covariance property and an independent derivation. The test: the Two-Boundary Reducibility Criterion converts ``second boundary'' from a description of how a weight was specified into a condition on whether it supplies case-specific information the forward boundary does not already determine.

This paper provides the formal framework; whether the questions it suggests lead anywhere in a particular substrate is an empirical matter. The remainder is organised as follows: \S\ref{sec:operator}--\S\ref{sec:kl} define the operator and give the three arguments; \S\ref{sec:unification} sets out their relation, and the disagreement between two of them; \S\ref{sec:properties} establishes the iteration properties; \S\ref{sec:scope} establishes the Two-Boundary Reducibility Criterion and derives from it a classification of backward boundaries by informational status relative to the case family; and \S\ref{sec:discussion} treats the single-boundary limits, the two-state-vector connection, and the operator's scope.

\section{The \dbci{} Operator}
\label{sec:operator}

Throughout the paper, $X$ denotes a finite outcome space and $\pfwd,\pbwd$ denote two separately specified boundary weights on $X$. The forward boundary carries evidence, such as a likelihood, generative-model prediction, or pre-selected boundary; the backward boundary carries structural, terminal, or post-selected information, such as a code-subspace weighting, post-selection effect, or prior-like structural constraint. The labels are conventional in the following sense: the combination law is symmetric under exchange of its arguments, so which weight is called \emph{forward} versus \emph{backward} reflects substrate-specific interpretation rather than any asymmetry in the rule. The roles the two boundaries play in the surrounding inference problem need not be symmetric: a substrate may update one while holding the other fixed, and \S\ref{sec:unification} turns on that distinction. For the comparison developed below, we restrict attention to the equal-exponent family $\mathcal{C}[(\pfwd\pbwd)^{\lambda}]$, where $\mathcal{C}$ denotes normalisation. The more general log-linear family $\mathcal{C}[\pfwd^{\lambda_1}\pbwd^{\lambda_2}]$ separates relative weighting from overall sharpness. Setting $\lambda_1=\lambda_2$ gives the two inputs equal log-weight; this is the exchange-symmetric subfamily studied here, not a consequence of calibration. Within this subfamily, \S\ref{sec:unification} compares the common exponents selected by opinion-pooling and unit-strength factor readings. The interpretive claims below concern $\lambda>0$; $\lambda=0$ discards both boundaries and $\lambda<0$ reverses their ordering.

\begin{definition}[\dbci{} Operator]
\label{def:operator}
Let $X$ be a finite outcome space and let $\pfwd,\pbwd\colon X\to[0,\infty)$ be nonnegative boundary weights with overlapping support, so that
\[
Z \;=\; \sum_{x\in X}\pfwd(x)\pbwd(x) \;>\; 0.
\]
The \dbci{} operator combines them into the probability distribution
\begin{equation}
\label{eq:operator-def}
\pdbci(x) \;=\; \frac{\pfwd(x)\,\pbwd(x)}{Z}.
\end{equation}
\end{definition}

\begin{remark}[Working regime]
\label{rem:well-defined}
The operator itself only requires $Z>0$, equivalently
\[
\operatorname{supp}(\pfwd)\cap \operatorname{supp}(\pbwd)\neq\emptyset.
\]
When logarithms, Kullback--Leibler divergences, or exponential-family parameterisations are used below, we additionally assume strict positivity on the relevant outcome space. Under that stricter assumption, $\log\pfwd$ and $\log\pbwd$ are finite everywhere, the symmetric-KL objective of \S\ref{sec:kl} is finite, and the backward boundary admits the energy parameterisation used in \S\ref{sec:properties}.
\end{remark}

\begin{remark}[Rescaling invariance]
\label{rem:rescaling}
The operator depends only on the product $\pfwd(x)\pbwd(x)$ up to positive scalar rescaling: replacing either boundary weight by a positive constant multiple leaves $\pdbci$ unchanged after normalisation. Thus $\pfwd$ and $\pbwd$ may be normalised probability distributions, likelihood-like factors, effects, priors, structural weights, or other positive log-potentials. This will be used implicitly in \S\ref{sec:discussion} when relating the operator to Bayes' rule, where the likelihood is naturally unnormalised.
\end{remark}

\begin{remark}[Neutral boundary]
\label{rem:neutral}
A uniform backward boundary represents the absence of a second constraint rather than a constraint that happens to be flat. Combining a forward boundary with it should therefore return that boundary unchanged, up to normalisation: nothing has been imposed, so nothing should move. The operator satisfies this, since $\pbwd$ constant gives $\pdbci = \mathcal{C}[\pfwd]$, the forward boundary up to the normalisation the operator always applies. The requirement is recorded here because it is a statement about what a boundary means, not about the form of the rule, and \S\ref{sec:unification} will ask which rules respect it.
\end{remark}

\begin{remark}[Scope of the operator]
\label{rem:operator-scope}
The operator applies when two boundary weights are separately specified on a shared outcome space and the task is to combine them into a posterior score or MAP decision rule. It is not a substitute for conditioning in a known joint model. Under conditional independence,
\[
P(x\mid y_1,y_2)\propto P(x)P(y_1\mid x)P(y_2\mid x),
\]
so multiplication is justified only after the inputs have been identified as non-duplicative factors of that posterior. Multiplying two posteriors generally counts their shared prior twice and requires the corresponding prior correction. When the operator represents the posterior, decisions based on it must account for the task's loss function, with MAP corresponding to zero-one loss on individual outcomes.
\end{remark}

\begin{remark}[Relation to established product rules]
\label{rem:product-rules}
The normalised product is presented in products of
experts~\cite{Hinton2002}, and it belongs to the weighted family
$\mathcal{C}\bigl[\prod_i p_i^{w_i}\bigr]$ of logarithmic
pooling, as the member with unit exponents. The two
should not be run together: logarithmic pooling is conventionally
normalised with $\sum_i w_i = 1$~\cite{Heskes1998}, which for two equally treated inputs
gives $\tfrac{1}{2}$ each and the geometric mean rather than the bare
product, and \S\ref{sec:unification} turns on exactly that difference. Closer still is the
Bayes-space geometry of van den Boogaart, Egozcue and
Pawlowsky-Glahn~\cite{vandenBoogaart2010}, whose objects are densities
identified up to positive rescaling and whose commutative group
operation, perturbation, is their pointwise product, with the
Radon--Nikodym derivative as its inverse. The operation is
Aitchison's~\cite{Aitchison1986}, who observed that perturbation on the
simplex is already familiar elsewhere in statistics as the operation by
which Bayes' formula carries a prior into a posterior. The operator of this paper is
that operation on a finite outcome space, and the quotient introduced
in \S\ref{sec:scope} is that identification, taken relative to the
support of the forward boundary. The contribution claimed here is
therefore not a new pooling algebra and not the proportionality
quotient, but the two-boundary interpretation, the comparison of the
quantum, maximum-entropy and decision-theoretic arguments bearing on the exponent, and
the criterion of \S\ref{sec:scope} for when the effective second argument adds no distinctions between cases sharing the same forward boundary.

The same product also appears wherever a forward and a backward message
are combined: in hidden Markov smoothing, and in the control- and
planning-as-inference literature~\cite{Attias2003,Levine2018}, where a
backward message carrying future reward or goal attainment multiplies a
forward term (the forward message in smoothing, a prior over actions or trajectories in planning) and is renormalised. There the product is derived from a
specified joint model, as a theorem about that model's posterior. That is not the only way the form can be earned: the pooling
literature discussed in \S\ref{sec:unification} characterises
multiplicative and geometric combination by axioms on the
aggregation rule itself, with no joint model in view.
Three warrants should therefore be kept apart: a
product may be derived from a joint model that is given; derived from
requirements imposed on the combination itself, no joint model being
available; or simply adopted, with neither supplied. The operator of
this paper has the second. \S\ref{sec:unification} shows that the
identity and covariance requirements belonging to a factor reading
select the unit exponent, and \S\ref{sec:abl} shows that the rank-one
ABL reduction returns the same member independently. What the operator
must not do is displace a joint model that is available and gives
something else (Remark~\ref{rem:operator-scope}).

An interpretive distinction is important, and it is the axis on which the
rest of the paper turns. If $\pfwd$ and $\pbwd$ are likelihood-like
factors, unit exponents give the usual product of factors. If they are
instead treated as two equally weighted probability opinions, the
conventional logarithmic pool uses exponents $1/2$ and $1/2$, yielding
the geometric mean. Symmetry alone therefore requires equal exponents
but does not fix their common value: what fixes it is a reading of what
the two inputs are. In the present framework, the unit exponents are
fixed by the rank-one ABL reduction once the inputs are identified as Born weights, or selected by the requirements
belonging to the reading of the inputs as unit-strength likelihood
factors. Both readings are represented below (\S\ref{sec:abl} takes the factor reading and \S\ref{sec:kl} the opinion reading, and \S\ref{sec:unification} sets out the resulting disagreement), so the exponent should be read throughout as a modelling choice rather than a fixed feature of the operator. One consequence should be stated here rather than left for a reader to discover: taken as an equal-source opinion pool, the unit-exponent operator is neither unanimity-preserving nor externally Bayesian under a common likelihood update, both of which select the half-exponent member instead. \S\ref{sec:unification} gives the computations and the reason those properties belong to the opinion reading rather than to the operator.
\end{remark}

\section{The Quantum Selection}
\label{sec:abl}

The first route to $\pdbci$ is direct: for a rank-one intermediate
projective measurement, the Aharonov--Bergmann--Lebowitz rule reduces
exactly to the normalised product of the forward and backward Born
weights. The identity is not new. It is an immediate consequence of
the ABL rule~\cite{AharonovBergmannLebowitz1964} once the intermediate
projectors are taken to be rank-one, and it is written in precisely
this form, as a normalised product of the two Born weights, in the
subsequent two-state-vector
literature~\cite{AharonovCohenGrussLandsberger2014}. It is restated
here because it is the premise the rest of the paper detaches from its
quantum setting, and because the load-bearing restriction is the
rank-one character of the measurement rather than any assumption about
phases in the boundary states.

For an intermediate measurement with rank-one projectors
$\hat P_n=|n\rangle\langle n|$, the ABL rule is
\begin{equation}
\label{eq:abl}
P_{\mathrm{ABL}}(n)
=
\frac{|\langle\phi|\hat P_n|\psi\rangle|^2}
{\sum_j|\langle\phi|\hat P_j|\psi\rangle|^2}.
\end{equation}
Write arbitrary normalised boundary states as
\begin{equation}
\label{eq:diagonal-states}
|\psi\rangle=\sum_x e^{i\alpha_x}\sqrt{\pfwd(x)}\,|x\rangle,
\qquad
|\phi\rangle=\sum_x e^{i\beta_x}\sqrt{\pbwd(x)}\,|x\rangle,
\end{equation}
where $\pfwd(x)=|\langle x|\psi\rangle|^2$ and
$\pbwd(x)=|\langle x|\phi\rangle|^2$.

\begin{proposition}[ABL Rank-One Reduction]
\label{thm:abl-reduction}
For arbitrary pure boundary states of the form
(\ref{eq:diagonal-states}) and rank-one projectors
$\hat P_n=|n\rangle\langle n|$,
\begin{equation}
\label{eq:dbci-from-abl}
P_{\mathrm{ABL}}(n)
=
\frac{\pfwd(n)\pbwd(n)}
{\sum_j\pfwd(j)\pbwd(j)}
=
\pdbci(n).
\end{equation}
\end{proposition}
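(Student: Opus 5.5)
The plan is to compute the ABL numerator for each $n$ directly, then sum to get the denominator. With $\hat P_n=|n\rangle\langle n|$ rank-one, the matrix element factorises:
\[
\langle\phi|\hat P_n|\psi\rangle=\langle\phi|n\rangle\langle n|\psi\rangle .
\]
Here the intermediate basis $\{|n\rangle\}$ is the same basis $\{|x\rangle\}$ in which the boundary states (\ref{eq:diagonal-states}) are expanded; I would state this explicitly, since it is what makes the labels $n$ and $x$ coincide.

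The key steps, in order, are as follows.

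First, read off the two coefficients from (\ref{eq:diagonal-states}):
\[
\langle n|\psi\rangle=e^{i\alpha_n}\sqrt{\pfwd(n)}, \qquad \langle\phi|n\rangle=\overline{\langle n|\phi\rangle}=e^{-i\beta_n}\sqrt{\pbwd(n)}.
\]

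Second, take the squared modulus of their product. Since $|e^{i(\alpha_n-\beta_n)}|=1$, the phases drop out termwise and
\[
|\langle\phi|\hat P_n|\psi\rangle|^2=\pfwd(n)\,\pbwd(n).
\]

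Third, sum over $j$ to obtain the denominator $\sum_j\pfwd(j)\pbwd(j)=Z$, and substitute into (\ref{eq:abl}). This gives the first equality of (\ref{eq:dbci-from-abl}). The second equality is then Definition~\ref{def:operator} verbatim.

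I would also note the well-definedness condition. The ABL denominator is nonzero precisely when $Z>0$, which is the overlapping-support hypothesis of Definition~\ref{def:operator}. It is also the physical requirement that the post-selection have nonzero probability. So the proposition holds exactly on the regime where either side is defined.

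There is no real obstacle in the computation. The point that needs care is the one the paper flags: the argument uses only that each projector is rank-one. That is what makes the matrix element a product of two scalars, so that the phases $\alpha_x,\beta_x$ cancel inside each modulus separately. No assumption that the states are real or share phases is needed.

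I would add a brief remark on why the restriction matters. For a degenerate projector, $\langle\phi|\hat P_n|\psi\rangle$ is a sum $\sum_{x\in n}e^{i(\alpha_x-\beta_x)}\sqrt{\pfwd(x)\pbwd(x)}$, whose squared modulus contains interference cross terms. So the reduction fails in general, and the rank-one hypothesis is load-bearing.
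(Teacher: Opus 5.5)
Your proof is correct and is the paper's argument: the rank-one matrix element factorises as $\langle\phi|n\rangle\langle n|\psi\rangle = e^{i(\alpha_n-\beta_n)}\sqrt{\pfwd(n)\pbwd(n)}$, the unimodular phase drops out of the squared modulus, and substituting into the ABL rule gives the result. Your added points — the shared basis, the $Z>0$ condition, and the interference cross terms for degenerate projectors — agree with what the paper states around the proposition.
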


\begin{proof}
For each outcome,
\[
\langle\phi|\hat P_n|\psi\rangle
=
e^{i(\alpha_n-\beta_n)}
\sqrt{\pfwd(n)\pbwd(n)}.
\]
The phase therefore cancels upon taking the modulus squared, giving
$|\langle\phi|\hat P_n|\psi\rangle|^2
=\pfwd(n)\pbwd(n)$. Substitution into the ABL rule proves the claim.
\end{proof}

The same statement holds more generally for an arbitrary density operator $\rho$ and a positive post-selection effect $E$, both propagated to the intermediate time. For the L\"uders instrument associated with the rank-one projectors, define
\[
\pfwd(x)=\langle x|\rho|x\rangle,
\qquad
\pbwd(x)=\langle x|E|x\rangle,
\]
where $\pbwd$ need not be normalised. Rank-one projectors then give
\[
\operatorname{Tr}(E P_x\rho P_x)
=
\langle x|\rho|x\rangle\langle x|E|x\rangle,
\]
and hence the same normalised product rule, provided the post-selection has non-zero probability. This is the rank-one
specialisation of the past-quantum-state construction
\cite{Gammelmark2013}.

\begin{remark}[Where the reduction fails]
For degenerate or coarse-grained projectors, coherent cross-terms
within a projected subspace may survive, and the ABL rule need not
reduce to a product of outcome probabilities.
\end{remark}

\begin{remark}[A case where the restriction bites]
Stabiliser syndrome measurement is degenerate by construction: a
syndrome identifies a coset of errors rather than an error, and the
projector onto a syndrome has rank $2^{k}$ for a code encoding $k$
logical qubits. The rank-one reduction above therefore does not apply to it, and not only on account of coherence: the trace over a
syndrome subspace need not factor into forward and backward marginals
even when the state carries no coherence within that subspace. Particular
degenerate cases can still factor after normalisation (an effect
proportional to the identity on each subspace contributes only the common
dimension factor, which the operator discards), so what degeneracy
removes is the general identity, not every instance of it. Two cases
should then be separated. Where the error model is explicitly stochastic,
the outcome space may be refined to classical error alternatives, and if
the two inputs are separately supplied factors on that refined space \S\ref{sec:maxent} supplies the family and the requirements of \S\ref{sec:unification} select the unit-exponent product within it, without reference to quantum mechanics;
the rank-one reduction is simply not the route to it. Where the errors within a
coset are coherent, neither the refinement nor the rank-one reduction
applies: enumeration presupposes that the alternatives are classical,
which is what is at issue.
\end{remark}

\section{The Maximum-Entropy Family}
\label{sec:maxent}

The maximum-entropy calculation~\cite{Jaynes1957} yields a variational characterisation of the equal-exponent family, conditional on choosing the log-statistic
\begin{equation}
\label{eq:gdef}
g(x) \;=\; \log\pfwd(x) + \log\pbwd(x).
\end{equation}
This choice imposes log-additivity and equal coefficients for the two boundaries. Maximising entropy at a prescribed value of $\langle g\rangle_W=\sum_x W(x)g(x)$ then selects a member of the family. It neither determines the statistic $g$ nor supplies the prescribed moment and hence does not select $\lambda$.

\begin{theorem}[Maximum-Entropy Family of Two-Boundary Combinations]
\label{thm:maxent}
Among all probability distributions $W$ on $X$ satisfying
\begin{enumerate}[label=\textup{(\roman*)}]
  \item normalisation $\sum_x W(x) = 1$, and
  \item the moment constraint $\sum_x W(x)\, g(x) = m$,
\end{enumerate}
the unique distribution maximising the Shannon entropy~\cite{Shannon1948} $H(W) = -\sum_x W(x)\log W(x)$ takes, for every $m$ in the interior of the convex hull of $\{g(x) : x \in X\}$, the form
\begin{equation}
\label{eq:maxent-form}
W^{*}_{\lambda}(x) \;=\; \frac{\bigl(\pfwd(x)\,\pbwd(x)\bigr)^{\lambda}}{\sum_{x'}\bigl(\pfwd(x')\,\pbwd(x')\bigr)^{\lambda}},
\end{equation}
where, for non-constant $g$, the Lagrange multiplier $\lambda$ is uniquely determined by $m$ and ranges over all of $\mathbb{R}$ as $m$ ranges over that interior. At either extreme feasible value of $m$ no finite $\lambda$ meets the constraint, and the entropy maximiser is instead the uniform distribution on the corresponding extremal level set $\{x \in X : g(x) = m\}$, which is the $\lambda \to +\infty$ limit of $W^{*}_{\lambda}$ at $m = \max_x g(x)$ and the $\lambda \to -\infty$ limit at $m = \min_x g(x)$. The one-parameter family $\{W^{*}_{\lambda} : \lambda \in \mathbb{R}\}$ therefore traces out all interior maximum-entropy solutions, with the two boundary cases as its limits. In the degenerate case where $g$ is constant on $X$ these statements do not apply: the convex hull is a single point, that value of $m$ is forced, every $\lambda$ satisfies the constraint, and the maximiser is the uniform distribution on $X$ for all of them, so $\lambda$ is non-identifiable. In every case the $\lambda = 1$ member is
\begin{equation}
W^{*}_{1}(x) \;=\; \frac{\pfwd(x)\,\pbwd(x)}{Z},
\qquad Z \;=\; \sum_{x'} \pfwd(x')\,\pbwd(x'),
\end{equation}
the operator $\pdbci$. The selection of $\lambda = 1$ over other members of the family is interpretive (Remark~\ref{rem:maxent-interp}): every member treats the two boundaries alike, since both carry the same exponent, and within this argument $\lambda = 1$ is distinguished only as the value at which each enters at unit exponent, neither amplified nor damped. It is not a value singled out by the maximum-entropy argument itself; \S\ref{sec:unification} gives requirements that do select it.
\end{theorem}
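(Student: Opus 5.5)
The plan is the standard exponential-family argument for a finite space. Throughout I assume strict positivity, as in Remark~\ref{rem:well-defined}, so that $g$ is finite on $X$.

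\textbf{Form of the maximiser.} Entropy $H$ is strictly concave on the simplex and the constraints (i) and (ii) are affine. Hence the feasible set is a compact convex polytope, a maximiser exists, and it is unique. To identify it for interior $m$, I would use the Gibbs inequality rather than Lagrange stationarity, which avoids separate boundary-of-simplex arguments. Set
\[
W_\lambda \propto e^{\lambda g} = (\pfwd\pbwd)^\lambda .
\]
For any feasible $W$,
\[
H(W) = -\operatorname{KL}(W\Vert W_\lambda) + \lambda m - \log Z_\lambda ,
\]
because $\sum_x W(x)\,g(x) = m$. So once a $W_\lambda$ satisfies the constraint, it maximises $H$. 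It is also the unique maximiser, since $\operatorname{KL}(W\Vert W_\lambda)=0$ only when $W=W_\lambda$.

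\textbf{Existence and uniqueness of $\lambda$ for each interior $m$.} Let $\mu(\lambda) = \langle g\rangle_{W_\lambda} = (\log Z_\lambda)'$. Its derivative is
\[
\mu'(\lambda) = \operatorname{Var}_{W_\lambda}(g),
\]
which is strictly positive when $g$ is non-constant, because $W_\lambda$ has full support. Therefore $\mu$ is strictly increasing and continuous. As $\lambda\to\pm\infty$, $W_\lambda$ concentrates uniformly on the level sets $\arg\max g$ and $\arg\min g$ respectively. This follows by dividing numerator and denominator by $e^{\lambda \max g}$ (respectively $e^{\lambda\min g}$). Consequently $\mu(\lambda)$ tends to $\max g$ and $\min g$. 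By the intermediate value theorem, $\mu$ is then a bijection from $\mathbb{R}$ onto the open interval $(\min g,\max g)$, which is the interior of the convex hull. This gives the uniqueness of $\lambda$ and the claim that it ranges over all of $\mathbb{R}$.

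\textbf{Extreme values of $m$.} At $m=\max g$, any feasible $W$ must be supported on $\{g=\max g\}$, since otherwise the average would fall strictly below the maximum. Entropy on that set is maximised uniquely by the uniform distribution. No finite $\lambda$ gives it, because every $W_\lambda$ has full support and so $\mu(\lambda)<\max g$. By the limit computed above, it is the $\lambda\to+\infty$ limit of $W_\lambda$; the case $m=\min g$ is symmetric.

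\textbf{Degenerate case and the $\lambda=1$ member.} If $g$ is constant, the only feasible value is $m=g$. Every $W$ satisfies (ii), so the maximiser is uniform on $X$, and every $W_\lambda$ equals that uniform distribution. Finally, substituting $\lambda=1$ into (\ref{eq:maxent-form}) gives Definition~\ref{def:operator} verbatim. The interpretive sentences of the statement require no proof.

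\textbf{Main obstacle.} The step needing most care is the endpoint analysis. There one has to be explicit that the maximiser is not of the form $W_\lambda$, and that the limit is uniform on the level set rather than a point mass. The monotonicity argument via the variance is routine once full support is noted.
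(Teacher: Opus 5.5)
Your proof is correct. Apart from one step it follows the paper's argument, and that step is handled by a genuinely different device.

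\textbf{What matches the paper.} The monotonicity of $\lambda \mapsto \langle g\rangle_{W_\lambda}$ via $\operatorname{Var}_{W_\lambda}(g) > 0$ is the same, as are the limits at $\pm\infty$ and the endpoint analysis on the level set $\{g = m\}$. You are slightly more explicit than the appendix in two places: you say why no finite $\lambda$ reaches an endpoint (full support forces a strict inequality), and you dispose of the constant-$g$ case, which the appendix leaves implicit.

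\textbf{Where you differ: optimality.} The paper obtains the exponential form from stationarity of the Lagrangian. It then closes the maximisation by an appeal to strict concavity of $H$ on the convex feasible set: a feasible, strictly positive stationary point is the unique global maximiser, boundary points of the simplex included. You instead posit $W_\lambda \propto e^{\lambda g}$ and verify it through the Gibbs identity.

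\textbf{What each route buys.} The Gibbs identity gives global optimality and uniqueness over the whole closed feasible set in one line. It needs no differentiability, and there is no separate question about maximisers on the boundary of the simplex, where $H$ is not differentiable. What it gives up is any account of where the exponential form comes from. The Lagrangian computation supplies that account, and the paper leans on it when it speaks of $\lambda$ as a multiplier.

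\textbf{A small correction.} For feasible $W$ the identity reads $H(W) = \log Z_\lambda - \lambda m - D_{\mathrm{KL}}(W \,\|\, W_\lambda)$; you can check this at $W = W_\lambda$. Your version has the constant with the opposite sign. This is harmless, because the constant does not depend on $W$ on the feasible set, so the conclusion is unaffected.
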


\emph{Proof in Appendix~\ref{app:proofs}.}

\begin{remark}[Informational interpretation]
\label{rem:maxent-interp}
The parameter $\lambda$ governs the strength with which $W$ is conditioned on the two boundaries. At $\lambda = 0$ the multiplier vanishes, the required moment is the one the uniform distribution already supplies, and $W^{*}$ is uniform; at $\lambda = 1$ each boundary enters at unit exponent and $W^{*} = \pdbci$; at $\lambda > 1$ the combination concentrates further on outcomes of larger joint product weight.
\end{remark}

\section{The Decision-Theoretic Selection}
\label{sec:kl}

The third argument is not a route to $\pdbci$: it selects a different member of the family of \S\ref{sec:maxent}. Throughout this section both boundaries are taken to be strictly positive probability distributions on $X$; this ensures that the divergences below are finite for every candidate $p'$, and by Remark~\ref{rem:rescaling} normalising them changes neither $\pdbci$ nor any conclusion drawn below. Given the two boundaries, what distribution $p'$ on $X$ is closest in Kullback--Leibler divergence~\cite{CoverThomas} to \emph{both} simultaneously? A natural answer is the minimiser of the boundary-symmetric divergence
\begin{equation}
\label{eq:kl-sym}
D_{\mathrm{sym}}(p') \;=\; \tfrac{1}{2}\!\left[D_{\mathrm{KL}}(p'\,\|\,\pfwd) + D_{\mathrm{KL}}(p'\,\|\,\pbwd)\right] \;=\; \tfrac{1}{2}\sum_{x} p'(x)\!\left[\ln\!\frac{p'(x)}{\pfwd(x)} + \ln\!\frac{p'(x)}{\pbwd(x)}\right].
\end{equation}
Here $D_{\mathrm{sym}}$ is the \emph{boundary-symmetric} objective: the average of the two forward divergences $D_{\mathrm{KL}}(p'\,\|\,\cdot)$ from $p'$ to each boundary, with the candidate in the first argument; it is not the Jeffreys symmetrisation of a single divergence between $\pfwd$ and $\pbwd$.

The answer is known. Minimising a weighted sum $\sum_i w_i D_{\mathrm{KL}}(p' \,\|\, q_i)$ over the first argument returns the normalised weighted geometric mean of the $q_i$; minimising over the second returns their arithmetic mean, the linear pool. The construction appears as the Kullback--Leibler average in distributed fusion~\cite{BattistelliChisci2014}, which gives logarithmic pooling a variational characterisation, and, for unnormalised weights, as a sided centroid in the Bregman-centroid literature~\cite{NielsenNock2009}. Lemma~\ref{lem:kl-min} is the equal-weight two-boundary case, restated here because the member it selects matters for \S\ref{sec:unification}. The naming conventions for which side is which are not stable across those literatures, so we state the minimisation explicitly rather than rely on them.

The minimiser is not $\pdbci$. It is $W^{*}_{1/2}$, the half-exponent member of the family of \S\ref{sec:maxent}, and it shares $\pdbci$'s MAP estimate for a reason that has nothing to do with either argument: every positive-exponent member of that family orders outcomes identically.

\begin{lemma}[Symmetric-KL Minimiser]
\label{lem:kl-min}
The unique distribution $p'$ on $X$ minimising (\ref{eq:kl-sym}) subject to $\sum_{x} p'(x) = 1$ is the normalised geometric mean
\begin{equation}
\label{eq:kl-min}
p'_{\mathrm{geo}}(x) \;=\; \frac{\sqrt{\pfwd(x)\,\pbwd(x)}}{\sum_{y}\sqrt{\pfwd(y)\,\pbwd(y)}}.
\end{equation}
\end{lemma}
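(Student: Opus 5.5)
The plan is to rewrite the objective (\ref{eq:kl-sym}) as a single Kullback--Leibler divergence to $p'_{\mathrm{geo}}$ plus a constant. Both existence and uniqueness then follow from Gibbs' inequality, without any Lagrange-multiplier or boundary analysis.

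Under the standing assumption of this section, $\pfwd$ and $\pbwd$ are strictly positive distributions. Hence $G := \sum_y \sqrt{\pfwd(y)\pbwd(y)}$ is finite and positive, and $p'_{\mathrm{geo}}$ in (\ref{eq:kl-min}) is a well-defined, strictly positive distribution. The key identity is
\[
\ln\frac{p'(x)}{\pfwd(x)} + \ln\frac{p'(x)}{\pbwd(x)} = 2\ln\frac{p'(x)}{\sqrt{\pfwd(x)\pbwd(x)}} = 2\ln\frac{p'(x)}{p'_{\mathrm{geo}}(x)} - 2\ln G ,
\]
valid wherever $p'(x)>0$. Terms with $p'(x)=0$ contribute zero under the convention $0\ln 0=0$, on both sides. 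Multiplying by $\tfrac12 p'(x)$, summing over $x$, and using $\sum_x p'(x)=1$ gives
\[
D_{\mathrm{sym}}(p') = D_{\mathrm{KL}}(p'\,\|\,p'_{\mathrm{geo}}) - \ln G .
\]

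Next, apply Gibbs' inequality: $D_{\mathrm{KL}}(p'\,\|\,q)\ge 0$, with equality if and only if $p'=q$. This holds for any distribution $q$, and here $q=p'_{\mathrm{geo}}$ is strictly positive, so the divergence is finite for every candidate $p'$. It follows that $D_{\mathrm{sym}}(p')\ge -\ln G$, with equality exactly when $p'=p'_{\mathrm{geo}}$. This gives both that the minimiser exists and that it is unique. As a side remark, the minimum value is $-\ln G\ge 0$, because $G\le 1$ by Cauchy--Schwarz; $G$ is the Bhattacharyya coefficient.

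No step is a serious obstacle. The only point needing care is candidates $p'$ with zeros, where the logarithmic identity is applied only on the support of $p'$. This is exactly why the completing-the-divergence route is preferable to the alternative of setting the gradient of the Lagrangian to zero, which would give $\ln p'(x) = \tfrac12(\ln\pfwd(x)+\ln\pbwd(x)) + c$ but would then require a separate argument via strict convexity of $p'\mapsto\sum p'\ln p'$ on the simplex and exclusion of boundary minima. I would mention that Lagrangian computation only as a heuristic cross-check, and add the one-line observation that $p'_{\mathrm{geo}} = W^{*}_{1/2}$ in the notation of Theorem~\ref{thm:maxent}.
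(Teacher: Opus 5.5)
Your proof is correct and is essentially the paper's own argument: the paper likewise rewrites the objective as $D_{\mathrm{sym}}(p') = D_{\mathrm{KL}}(p'\,\|\,p'_{\mathrm{geo}}) - \ln C$ with $C=\sum_y\sqrt{\pfwd(y)\pbwd(y)}$, concludes from the non-negativity of the divergence and its vanishing only at $p'=p'_{\mathrm{geo}}$, and notes as you do that this covers the whole simplex without any differentiation. Your explicit treatment of zeros of $p'$ and your remark that the minimum value $-\ln G$ is non-negative by Cauchy--Schwarz are correct refinements that the paper does not spell out.
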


\emph{Proof in Appendix~\ref{app:proofs}.}

\begin{proposition}[Ordering invariance across the family]
\label{thm:kl-opt}
For every $\lambda > 0$ the members $W^{*}_{\lambda} \propto (\pfwd\,\pbwd)^{\lambda}$ order the outcomes of $X$ identically, and therefore share a common maximiser set:
\begin{equation*}
\arg\max_{x \in X} W^{*}_{\lambda}(x) \;=\; \arg\max_{x \in X} \bigl(\pfwd(x)\,\pbwd(x)\bigr) \qquad \text{for all } \lambda > 0 .
\end{equation*}
In particular $\pdbci = W^{*}_{1}$ and $p'_{\mathrm{geo}} = W^{*}_{1/2}$ share a maximiser set, which is a property of the family rather than a relation between the two arguments that select them.
\end{proposition}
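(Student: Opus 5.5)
The plan is to reduce the statement to strict monotonicity of the power map. Write $u(x) = \pfwd(x)\,\pbwd(x) \ge 0$ and $Z_\lambda = \sum_{x'} u(x')^{\lambda}$. One caveat comes first. If some outcomes have $u(x)=0$, I will use the convention $0^{\lambda}=0$ for $\lambda>0$; it is the natural extension and agrees with Definition~\ref{def:operator} at $\lambda=1$. Under strict positivity (Remark~\ref{rem:well-defined}) the issue does not arise. In either case $Z_\lambda>0$, because $Z>0$ forces $u(x)>0$ for at least one $x$. Hence $W^{*}_{\lambda}(x) = u(x)^{\lambda}/Z_\lambda$ is well defined for every $\lambda>0$.

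The key step is that $t \mapsto t^{\lambda}$ is strictly increasing on $[0,\infty)$ for each fixed $\lambda>0$. For $t>0$ it is strictly increasing because $t^{\lambda} = e^{\lambda\log t}$ and $\lambda>0$. The point $t=0$ maps to $0$, which lies strictly below every positive value. Dividing by the positive constant $Z_\lambda$ preserves strict order. Therefore, for all $x,y\in X$,
\[
W^{*}_{\lambda}(x) < W^{*}_{\lambda}(y) \iff u(x) < u(y),
\qquad
W^{*}_{\lambda}(x) = W^{*}_{\lambda}(y) \iff u(x) = u(y).
\]
This is exactly the statement that every $W^{*}_{\lambda}$ with $\lambda>0$ induces the same weak order on $X$ as $u$. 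In particular, any two members $\lambda,\mu>0$ induce the same order as each other.

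The maximiser claim then follows directly. Since $X$ is finite, $\arg\max u$ is nonempty. An element $x$ is a maximiser of $W^{*}_{\lambda}$ exactly when $W^{*}_{\lambda}(x)\ge W^{*}_{\lambda}(y)$ for all $y$, which by the order equivalence holds exactly when $u(x)\ge u(y)$ for all $y$. This gives the displayed equality of argmax sets. Specialising to $\lambda=1$ gives $W^{*}_{1}=\pdbci$ by Theorem~\ref{thm:maxent}. Specialising to $\lambda=\tfrac12$ gives $W^{*}_{1/2}=p'_{\mathrm{geo}}$ by comparing with (\ref{eq:kl-min}). The two therefore share a maximiser set.

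The only point that needs care is the zero-weight convention in the first paragraph, and it is also where the restriction $\lambda>0$ matters. For $\lambda<0$ the order reverses, and for $\lambda=0$ all outcomes tie, so both are excluded. The closing sentence of the statement, that shared maximisers are a property of the family, is interpretive. It is justified by noting that the argument above used only the monotonicity of the power map, and neither the KL minimisation of Lemma~\ref{lem:kl-min} nor the quantum reduction of Proposition~\ref{thm:abl-reduction}.
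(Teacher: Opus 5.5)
Your proof is correct and follows essentially the same argument as the paper. Strict monotonicity of $t\mapsto t^{\lambda}$ on $[0,\infty)$ for $\lambda>0$, preserved under division by the positive normaliser $Z_\lambda$, gives the same weak order with ties included, hence the same maximiser set, and specialising to $\lambda=1$ and $\lambda=\tfrac12$ yields the stated case. Your explicit handling of zero weights and of $Z_\lambda>0$ is extra care that the paper leaves implicit.
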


\emph{Proof in Appendix~\ref{app:proofs}.}

\begin{remark}[Distribution versus point estimate]
\label{rem:kl-decision}
The geometric-mean distribution $p'_{\mathrm{geo}}$ and the operator $\pdbci$ generally differ; on the strictly positive domain they coincide exactly when $\pfwd\pbwd$ is constant. They are nevertheless interchangeable as decision rules in any setting that depends only on the ordering of outcomes, and in particular on the MAP estimate. The product form is the natural rule when the two boundaries are likelihood-like factors or log-potentials to be jointly imposed; the geometric mean is the natural rule when one seeks a symmetric distributional midpoint under the forward-KL objective in Lemma~\ref{lem:kl-min}. These objectives differ distributionally but agree at the MAP level.
\end{remark}

\section{One Family, Two Readings}
\label{sec:unification}

The arguments of \S\ref{sec:abl}--\S\ref{sec:kl} do not stand in a hierarchy of specification strength. One of them supplies a family and selects no member of it; every other criterion considered here selects, and the criteria fall into two groups according to what they take the two inputs to be. This section states the elementary conditions that force each value, and then places the two derivations of \S\ref{sec:abl} and \S\ref{sec:kl} beside them.

Throughout this section $f$, $b$ and $p$ denote strictly positive weights on $X$ with $|X| \geq 2$, $\mathcal{C}$ denotes normalisation, $u$ the uniform weight, and
\[
W^{*}_{\lambda}[f,b] \;=\; \mathcal{C}\bigl[(f b)^{\lambda}\bigr],
\qquad
U_{L}\,p \;=\; \mathcal{C}\bigl[L\,p\bigr],
\]
written with square brackets to distinguish the map on weights from the distribution $W^{*}_{\lambda}(x)$ of Theorem~\ref{thm:maxent}, so that $W^{*}_{1}[\pfwd,\pbwd] = \pdbci$, and $U_{L}$ is multiplication by a positive weight followed by renormalisation, which is Bayesian updating when $L$ is a likelihood.

\begin{itemize}[noitemsep,topsep=2pt]
  \item \S\ref{sec:maxent} \emph{supplies the family}. Maximum entropy under a moment constraint on $g = \log\pfwd + \log\pbwd$ delivers $W^{*}_{\lambda} \propto (\pfwd\,\pbwd)^{\lambda}$ in its entirety, and selects no member: the multiplier $\lambda$ is fixed only once the moment value is prescribed, which the argument does not supply.
  \item \S\ref{sec:kl} \emph{selects $\lambda = \tfrac{1}{2}$}. The symmetric-KL minimiser is the normalised geometric mean $\sqrt{\pfwd\,\pbwd}/Z'$ of Lemma~\ref{lem:kl-min}, which is $W^{*}_{1/2}$: the distributional midpoint of two things treated as comparable objects.
  \item \S\ref{sec:abl} \emph{selects $\lambda = 1$}. The rank-one reduction of ABL returns the unit-exponent member, with no free exponent parameter once the two boundary weights have been identified as Born weights.
\end{itemize}

\begin{proposition}[Identity and covariance requirements]
\label{prop:exponent-selection}
Let $\lambda > 0$. Then, on strictly positive weights,
\begin{enumerate}[label=\textup{(\roman*)},noitemsep,topsep=2pt]
  \item $W^{*}_{\lambda}[p,p] = \mathcal{C}[p]$ for every $p$ \quad iff \quad $\lambda = \tfrac{1}{2}$;
  \item $W^{*}_{\lambda}[f,u] = \mathcal{C}[f]$ for every $f$ \quad iff \quad $\lambda = 1$;
  \item $W^{*}_{\lambda}[U_{L}f,\, U_{L}b] = U_{L}\,W^{*}_{\lambda}[f,b]$ for every $f, b$ and every positive $L$ \quad iff \quad $\lambda = \tfrac{1}{2}$;
  \item $W^{*}_{\lambda}[U_{L}f,\, b] = U_{L}\,W^{*}_{\lambda}[f,b]$ for every $f, b$ and every positive $L$ \quad iff \quad $\lambda = 1$.
\end{enumerate}
\end{proposition}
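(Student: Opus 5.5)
The approach is to reduce each of the four equalities to a proportionality between positive weights. Both sides are normalised, and on strictly positive weights $\mathcal{C}[a]=\mathcal{C}[a']$ holds iff $a=c\,a'$ for some constant $c>0$. Each condition then becomes a statement that a certain function of $x$ is constant on $X$. In every case the ``if'' direction is a one-line substitution, and the ``only if'' direction needs a single witness on which a non-constant function appears.

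\textbf{Items (i) and (ii).} For (i), $W^{*}_{\lambda}[p,p]=\mathcal{C}[p^{2\lambda}]$, so the requirement is that $p^{2\lambda-1}$ be constant for every strictly positive $p$. At $\lambda=\tfrac12$ this holds trivially. Otherwise, since $|X|\ge2$, pick any non-constant $p$, say values $1$ and $2$ on two points. Then $p^{2\lambda-1}$ is non-constant, because $t\mapsto t^{2\lambda-1}$ is injective on $(0,\infty)$ when $2\lambda-1\neq0$.

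For (ii), the uniform weight $u$ is constant, so $W^{*}_{\lambda}[f,u]=\mathcal{C}[f^{\lambda}]$. The requirement becomes that $f^{\lambda-1}$ be constant for every $f$, and the same injectivity argument with a non-constant $f$ forces $\lambda=1$.

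\textbf{Items (iii) and (iv).} Use the fact that normalisation commutes with positive powers up to constants: $\bigl(\mathcal{C}[a]\bigr)^{\lambda}\propto a^{\lambda}$. This is the only point needing care, since one must track that the normalising constants of $U_Lf$ and $U_Lb$ only contribute scalar factors, which $\mathcal{C}$ absorbs by Remark~\ref{rem:rescaling}.

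For (iii), the left side is therefore $\mathcal{C}[L^{2\lambda}(fb)^{\lambda}]$ and the right side is $\mathcal{C}[L(fb)^{\lambda}]$. Equality for all $f,b$ is equivalent to $L^{2\lambda-1}$ being constant, and taking $L$ non-constant forces $\lambda=\tfrac12$.

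For (iv), the left side is $\mathcal{C}[L^{\lambda}(fb)^{\lambda}]$ against $\mathcal{C}[L(fb)^{\lambda}]$. The requirement reduces to $L^{\lambda-1}$ being constant, giving $\lambda=1$ by choosing a non-constant $L$.

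\textbf{Where care is needed.} No step is a genuine obstacle. The only points requiring attention are the bookkeeping of normalisation constants in (iii) and (iv), and the use of $|X|\ge2$ to guarantee that a non-constant witness exists. Without that assumption every $\lambda$ would satisfy every condition, as in the degenerate case of Theorem~\ref{thm:maxent}. Strict positivity ensures the powers and the proportionality criterion are well defined, and $\lambda>0$ is not needed beyond matching the statement's scope.
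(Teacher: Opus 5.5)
Your proposal is correct and follows the paper's proof essentially line for line. You reduce each equality to proportionality of positive weights, compute $\mathcal{C}[p^{2\lambda}]$, $\mathcal{C}[f^{\lambda}]$, $\mathcal{C}[L^{2\lambda}(fb)^{\lambda}]$ against $\mathcal{C}[L(fb)^{\lambda}]$, and $\mathcal{C}[L^{\lambda}(fb)^{\lambda}]$, and then force the exponent with a non-constant witness guaranteed by $|X|\ge 2$. The differences are cosmetic: you use injectivity of $t\mapsto t^{2\lambda-1}$ where the paper solves $r^{2\lambda}=r$ for a ratio $r\ne 1$, and you make explicit how the normalising constants of $U_Lf$ and $U_Lb$ are absorbed, which the paper leaves implicit.
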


\begin{proof}
All four are the same computation. Since normalisation removes positive constants, an identity $\mathcal{C}[A] = \mathcal{C}[B]$ between positive weights holds exactly when $A = cB$ pointwise for some $c > 0$.

For (i), $W^{*}_{\lambda}[p,p] = \mathcal{C}[p^{2\lambda}]$, so the requirement is $p^{2\lambda} = c\,p$. Taking $x, y$ with $p(x) \neq p(y)$ and writing $r = p(x)/p(y) \neq 1$ gives $r^{2\lambda} = r$, hence $2\lambda = 1$. Conversely $\lambda = \tfrac{1}{2}$ returns $\mathcal{C}[p]$. For (ii), $u$ is constant, so $W^{*}_{\lambda}[f,u] = \mathcal{C}[f^{\lambda}]$ and the same argument gives $\lambda = 1$. Each requires only that some non-uniform weight exist, which $|X| \geq 2$ supplies.

For (iii), $W^{*}_{\lambda}[U_{L}f, U_{L}b] = \mathcal{C}\bigl[L^{2\lambda}(fb)^{\lambda}\bigr]$ while $U_{L}W^{*}_{\lambda}[f,b] = \mathcal{C}\bigl[L\,(fb)^{\lambda}\bigr]$, so the requirement is $L^{2\lambda - 1}$ constant. Ranging over non-constant $L$ forces $2\lambda = 1$. For (iv), only one factor carries $L$, giving $L^{\lambda - 1}$ constant and $\lambda = 1$. Both converses are immediate.
\end{proof}

The quantifiers carry the weight. Individual distributions satisfy these equalities accidentally at other exponents: a uniform $p$ satisfies (i) for every $\lambda$, and a constant $L$ satisfies (iii) and (iv) for every $\lambda$. It is therefore the requirement holding across the admissible inputs, not at a chosen one, that identifies the member.

The four conditions divide into two pairs, and the division is the substance of this section. Conditions (i) and (iii) belong to a reading on which the two inputs are equally weighted opinions about the same outcome. On that reading a repeated opinion carries no more than a single one (unanimity preservation), which is (i); and one item of evidence bearing on the shared subject enters both inputs, which is (iii). Both force $\lambda = \tfrac{1}{2}$. Conditions (ii) and (iv) belong to a reading on which the two inputs are separately applied factors whose effects accumulate. On that reading the absence of a second factor leaves the first untouched, which is (ii) and is the requirement recorded in Remark~\ref{rem:neutral}; and evidence acting on one factor passes through the combination unchanged, which is (iv). Both force $\lambda = 1$.

Two of these have names when the inputs are read as probability opinions and $L$ as a likelihood. Condition (iii) is then the external-Bayesianity equation, the requirement that pooling commute with updating every input by a common likelihood; and for the ordinary logarithmic form $\mathcal{C}\bigl[\prod_i p_i^{w_i}\bigr]$ direct substitution into that equation requires $\sum_i w_i = 1$, which for two equally treated inputs is $\tfrac{1}{2}$ each; the axiom is due to Madansky~\cite{Madansky1964}, and its role in characterising pooling operators to Genest, McConway and Schervish~\cite{GenestMcConwaySchervish1986,GenestZidek1986}. The operator of this paper therefore fails external Bayesianity, and it is worth stating that as a fact rather than a matter of jurisdiction: $W^{*}_{1}$ does not satisfy the equation in (iii). What the pair (iii)--(iv) shows is that it is not the only such equation. Condition (iv) is the two-input, one-coordinate form of the individualwise Bayesianity of Dietrich and List~\cite{DietrichList2016}, who introduce that axiom and use it to characterise multiplicative pooling. The two requirements differ in what is updated: external Bayesianity asks that pooling commute with updating \emph{every} input by a \emph{common} likelihood, individualwise Bayesianity that it commute with updating \emph{one} input by a likelihood of its own, the other left untouched. That is why they select different exponents, and the reason is a count: a common update enters the combination through both factors and appears as $L^{2\lambda}$, a single-input update through one factor and appears as $L^{\lambda}$, while the updated combination is $L$ in either case. Dietrich~\cite{Dietrich2010} had previously derived and, for finite hypothesis spaces, characterised the multiplicative class under a distinct Bayesian model of independently supplied information. The distinction between information shared across the inputs and information supplied to them separately, which is what separates the two pairs, is drawn here from Dietrich and List~\cite{DietrichList2016}, who show that the first motivates geometric pooling and the second multiplicative pooling; the shared-versus-private contrast itself is already present in Dietrich~\cite{Dietrich2010}. The rationale we give for it is the double count just described: evidence reaching both inputs is present twice in the product and has to be discounted, which is what the half exponent does, whereas evidence reaching one input alone is present once and should accumulate. Their multiplicative class carries an exogenous calibrating factor, and the bare product is the member obtained when that factor is uniform. Two levels should be kept apart here. In that broader pooling setting the individualwise-Bayesian requirement selects the multiplicative class, and Dietrich and List force the calibrating factor to be uniform with a further axiom, indifference preservation (uniform opinions pool to the uniform one), which they regard as plausible only without shared background information. Here a neutral-boundary requirement does that work, in Remark~\ref{rem:neutral}, which is stated independently of the exponent-selection argument. Within the one-parameter family of this paper the same update equation already forces $\lambda = 1$ on its own, as (iv) shows; the calibration question does not arise, because the family contains no free calibrating factor. We claim neither the distinction nor the characterisation; the factor reading adopted here places \dbci{} on the second side whenever the substrate supplies boundary-local updates, in the sense of \S\ref{sec:operator}.

The two derivations therefore disagree, and they disagree along the same line as the elementary conditions. This is not a defect of either argument: they answer different formal questions, and Proposition~\ref{prop:exponent-selection} shows the difference between their answers to be exactly the difference between the identity and covariance requirements of the two readings.

The disagreement is invisible to the comparison ordinarily made. For every $\lambda > 0$ the map $t \mapsto t^{\lambda}$ is strictly increasing, so all positive-exponent members of the family order the outcomes identically and share an $\arg\max$. A comparison conducted at MAP level therefore cannot distinguish them, and agreement there is a property of the family rather than evidence that two arguments converge. What the exponent governs is not which outcome is selected but how sharply it is preferred, that is, whether the two boundaries compound or average.

That invariance is pointwise, and does not extend to decisions taken after aggregation.

\begin{remark}[Aggregation breaks the invariance]
\label{rem:aggregation}
Proposition~\ref{thm:kl-opt} states that the members of the family agree on the order of individual outcomes. It does not follow that they agree on the order of sums of outcomes. A strictly increasing map preserves the order of its arguments but not the order of their sums, so where the decision is taken between subsets $A, B \subseteq X$ by comparing $\sum_{x \in A} W^{*}_{\lambda}(x)$ against $\sum_{x \in B} W^{*}_{\lambda}(x)$, the exponent can change which subset is preferred while leaving every pairwise comparison of individual outcomes intact.

For a minimal instance take joint weights $\pfwd\pbwd$ proportional to $(0.9,\, 0.01,\, 0.3,\, 0.3)$ with $A = \{x_1, x_2\}$ and $B = \{x_3, x_4\}$. At $\lambda = 1$ the subset scores are $0.91$ and $0.60$, so $A$ is preferred; at $\lambda = \tfrac{1}{2}$ they are $\sqrt{0.9} + \sqrt{0.01} \approx 1.049$ and $2\sqrt{0.3} \approx 1.095$, so $B$ is. The four outcomes stand in the same order throughout.

The two selections of \S\ref{sec:unification} can therefore be separated by a decision rule that scores a class by the mass the operator assigns to it. Where the outcome space is partitioned and the quantity of interest is the class rather than the configuration, as when error configurations are grouped into equivalence classes and the class weights compared, the exponent is not a matter of presentation. It is worth stating in the negative as well: a rule that returns the single highest-weight outcome is unaffected, which is why the question can be overlooked. The disagreement is invisible exactly where comparisons are usually made, and consequential exactly where they are usually acted upon.
\end{remark}

Three kinds of requirement have appeared: an identity property, fixing what it means for an input to add nothing; a covariance property, fixing how new information enters; and an independent derivation. Each appears once on either side, and the split falls in the same place all three times. That is the result of this section, and it is not a tally of six criteria: the members of each column share premises about what the two inputs are, and it is those premises, not the count, that select the exponent.

The operator of this paper is $W^{*}_{1}$, selected by the reading on which the two boundaries are separately applied factors. Rank-one ABL supplies a physical setting that lands on the same member independently, which is why \S\ref{sec:abl} is given the weight it is: not as the sole warrant for the unit exponent, but as a derivation from outside the inference-theoretic premises that agrees with them.

The shared-versus-separate axis has been carried into quantum mechanics before, by a different route. In the quantum state-pooling problem two agents, measuring their own parts of a tripartite system, assign states $\sigma_{A}$ and $\sigma_{B}$ to a third part whose unconditioned state is $\rho$, and a pooler told only those three recovers the state an overseer holding all the data would assign, $\tau \propto \sigma_{A}\rho^{-1}\sigma_{B}$, with $\rho^{-1}$ taken on its support; Spekkens and Wiseman~\cite{SpekkensWiseman2007} identify classes of tripartite state on which that formula holds, and Leifer and Spekkens~\cite{LeiferSpekkens2014} rederive it from a quantum Bayesian conditioning requirement under weaker assumptions. That arrangement, separately acquired data over a commonly held background, is the one under which multiplicative pooling is justified, and the inverse factor discharges the shared background in the role played above by the exogenous calibrating factor; the bare product is again what remains when that background is uniform. Leifer and Spekkens name the multiplicative opinion pool and, for agents who condition a shared prior on separately collected data, obtain the member with unit weights on the two states and weight $-1$ on the prior, which is the bare unit product when that prior is uniform. The correspondence just drawn, between a rank-one post-selection reduction and the separately-supplied side of the split, is not made there. The adjacency is close enough to record.

\begin{remark}[Geometric form]
\label{rem:clr}
On strictly positive weights the family has a compact description. Writing $\operatorname{clr}$ for the centred log-ratio map, $\operatorname{clr}(p)_{x} = \log p(x) - \tfrac{1}{|X|}\sum_{y}\log p(y)$, normalisation is quotienting by constants and
\[
\operatorname{clr} W^{*}_{\lambda}[f,b] \;=\; \lambda\bigl(\operatorname{clr} f + \operatorname{clr} b\bigr).
\]
At $\lambda = 1$ the two log-ratio vectors are added; at $\lambda = \tfrac{1}{2}$ their midpoint is taken. Adding and averaging are the two readings of this section in one line, and the surrounding structure is the Bayes-space geometry of Remark~\ref{rem:product-rules}, in which normalised pointwise multiplication is the group operation and the uniform weight its identity~\cite{vandenBoogaart2010}. The restriction to strict positivity is the one already in force wherever logarithms are used; the operator itself is defined on the wider domain of Remark~\ref{rem:well-defined}.
\end{remark}

\section{Properties of the Operator}
\label{sec:properties}

The properties below concern a tempered family generated by repeated
multiplication by a fixed backward factor. Define
\begin{equation}
\label{eq:T-operator}
T[p](x)=
\frac{p(x)\pbwd(x)}
{\sum_y p(y)\pbwd(y)}.
\end{equation}
A direct calculation gives
\[
T^k[\pfwd](x)
=
\frac{\pfwd(x)\pbwd(x)^k}
{\sum_y\pfwd(y)\pbwd(y)^k}.
\]
Thus $k$ controls the effective strength of the backward boundary;
for $\pbwd(x)\propto e^{-\beta E(x)}$, iteration is exactly annealing
with inverse temperature $k\beta$.

\subsection{Fixed Points of the Iterated Operator}
\label{sec:fixed-point-char}

The fixed points of $T$ admit an exact characterisation, and no parameterisation of $\pbwd$ is needed to state it.

\begin{proposition}[Fixed-Point Characterisation]
\label{prop:fixed-point-char}
Let $\pbwd$ be strictly positive, let $T$ be the iterated \dbci{} operator (\ref{eq:T-operator}), and let $p$ be a probability distribution on $X$. Then
\begin{equation}
\label{eq:fixed-point-char}
T[p] \;=\; p \qquad\Longleftrightarrow\qquad \pbwd \text{ is constant on } \operatorname{supp}(p).
\end{equation}
\end{proposition}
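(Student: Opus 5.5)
The plan is to argue directly from the definition (\ref{eq:T-operator}), writing $Z_p = \sum_y p(y)\pbwd(y)$ for the normaliser. Because $\pbwd$ is strictly positive and $p$ is a probability distribution, $Z_p > 0$, so $T[p]$ is always defined and no support condition of the kind in Remark~\ref{rem:well-defined} needs separate attention. The fixed-point equation $T[p]=p$ is then, pointwise, $p(x)\pbwd(x) = Z_p\, p(x)$ for every $x \in X$. I would check the two directions separately.

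For ($\Leftarrow$), suppose $\pbwd \equiv c$ on $\operatorname{supp}(p)$. Off the support both sides of the pointwise equation vanish. On the support, $Z_p = \sum_{y \in \operatorname{supp}(p)} p(y)\, c = c$, because $p$ sums to one there. Hence $T[p](x) = p(x)c/c = p(x)$. This is the rescaling observation of Remark~\ref{rem:rescaling}, applied on the support: a constant factor disappears under normalisation.

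For ($\Rightarrow$), suppose $T[p]=p$. For $x \in \operatorname{supp}(p)$ we may divide the pointwise equation by $p(x) > 0$, which gives $\pbwd(x) = Z_p$. The right-hand side does not depend on $x$, so $\pbwd$ takes the single value $Z_p$ on the whole support. As a consistency check, $Z_p$ is then the $p$-average of $\pbwd$, and that average equals the common value.

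No step is a real obstacle. The one point that needs care is that the division by $p(x)$ is legitimate only on $\operatorname{supp}(p)$, which is why the conclusion concerns the support and not all of $X$. Outside the support, $\pbwd$ may take any values, since $T$ preserves zeros. Strict positivity of $\pbwd$ is used only to guarantee $Z_p>0$, so that $T[p]$ is defined for every $p$. It also ensures that $T$ never shrinks the support, which confirms that the support is the right set on which to state the condition.
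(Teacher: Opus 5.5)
Your proposal is correct and follows the paper's proof essentially step for step. It uses the same normaliser $Z_p>0$ guaranteed by strict positivity of $\pbwd$, obtains $Z_p=c$ on the support for ($\Leftarrow$), and for ($\Rightarrow$) divides by $p(x)>0$ to get $\pbwd(x)=Z_p$ independently of $x$.
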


\emph{Proof in Appendix~\ref{app:proofs}.}

\begin{remark}[Structural meaning]
\label{rem:fixed-point-char}
Iteration halts exactly where the backward boundary has stopped discriminating. On a support across which $\pbwd$ takes a single value, every surviving outcome is rescaled by the same constant and the normalised distribution is unchanged; conversely, any distribution left invariant by $T$ must be supported where $\pbwd$ is flat, since the operator would otherwise reweight it. The condition is a statement about $\pbwd$ restricted to $\operatorname{supp}(p)$ and not about its global shape: a backward boundary that discriminates sharply elsewhere in $X$ is inert on such a support. This characterises stationary distributions; \S\ref{sec:fixed-point} establishes convergence, including possible concentration on a single outcome.
\end{remark}

\subsection{Equal-Energy Fixed-Point Structure}
\label{sec:fixed-point}

For the limiting behaviour of the iteration it is convenient to parameterise the backward boundary in exponential-family form,
\begin{equation}
\label{eq:bwd-energy}
\pbwd(x) \;\propto\; e^{-\beta E(x)},
\end{equation}
with energy function $E\colon X \to \mathbb{R}$ and inverse temperature $\beta > 0$, an option made available by the strict positivity of $\pbwd$ (Remark~\ref{rem:well-defined}). The energy is determined by $\pbwd$ only up to an additive constant, but such a shift alters only the proportionality constant in (\ref{eq:bwd-energy}) and hence nothing the operator sees, so no normalisation of $E$ is required. Equal values of $E$ are exactly equal values of $\pbwd$.

\begin{proposition}[Equal-Energy Fixed Point]
\label{prop:fixed-point}
Let $\pfwd$ be a probability distribution on a finite outcome space $X$ and let $\pbwd$ admit the parameterisation (\ref{eq:bwd-energy}). Set
\begin{equation}
E_{\min}^{\pfwd} \;=\; \min\{E(x) : x \in \operatorname{supp}(\pfwd)\}, \qquad M_{\min}^{\pfwd} \;=\; \{x \in \operatorname{supp}(\pfwd) : E(x) = E_{\min}^{\pfwd}\}.
\end{equation}
Then the iterated sequence $p_k = T^{k}[\pfwd]$ converges to
\begin{equation}
\label{eq:fixed-point-limit}
p_\infty(x) \;=\;
\begin{cases}
\dfrac{\pfwd(x)}{\sum_{y \in M_{\min}^{\pfwd}} \pfwd(y)} & x \in M_{\min}^{\pfwd},\\[0.6ex]
0 & \text{otherwise},
\end{cases}
\end{equation}
which is a fixed point of $T$ supported on outcomes of equal (minimum-available) backward energy.
\end{proposition}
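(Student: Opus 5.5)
The plan is to use the closed form for the iterates recorded just before \S\ref{sec:fixed-point-char}, $T^k[\pfwd](x) = \pfwd(x)\pbwd(x)^k / \sum_y \pfwd(y)\pbwd(y)^k$. Substituting the parameterisation (\ref{eq:bwd-energy}), the proportionality constant cancels between numerator and denominator. This gives
\[
p_k(x) \;=\; \frac{\pfwd(x)\,e^{-k\beta E(x)}}{\sum_{y}\pfwd(y)\,e^{-k\beta E(y)}}.
\]
Only outcomes in $\operatorname{supp}(\pfwd)$ contribute to either sum, so $p_k(x)=0$ for every $k$ when $x\notin\operatorname{supp}(\pfwd)$. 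The denominator is strictly positive, because $\pfwd$ is a probability distribution and $\pbwd>0$.

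The key step is to multiply numerator and denominator by $e^{k\beta E_{\min}^{\pfwd}}$. Every exponent then becomes $-k\beta\,(E(x)-E_{\min}^{\pfwd})$, which is nonpositive on the support. The denominator splits into two parts:
\[
\sum_{y\in M_{\min}^{\pfwd}}\pfwd(y) \;+\; \sum_{y\in\operatorname{supp}(\pfwd)\setminus M_{\min}^{\pfwd}}\pfwd(y)\,e^{-k\beta(E(y)-E_{\min}^{\pfwd})}.
\]
The first part is a fixed positive constant $S$, positive because $M_{\min}^{\pfwd}$ is nonempty: the minimum over the finite nonempty set $\operatorname{supp}(\pfwd)$ is attained. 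In the second part, $E(y)-E_{\min}^{\pfwd}\ge\delta>0$ with $\delta$ the gap to the next energy level on the support, and $\beta>0$. Each term is therefore at most $\pfwd(y)e^{-k\beta\delta}\to 0$, and since $X$ is finite the whole second part tends to $0$.

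The numerator behaves the same way. For $x\in M_{\min}^{\pfwd}$ it equals $\pfwd(x)$ identically, so $p_k(x)\to\pfwd(x)/S$. For $x$ in the support but outside $M_{\min}^{\pfwd}$, it tends to $0$ while the denominator tends to $S>0$, so $p_k(x)\to 0$. Together these give pointwise convergence to $p_\infty$ in (\ref{eq:fixed-point-limit}), and pointwise convergence is convergence in any norm on the finite simplex. If $M_{\min}^{\pfwd}$ is the whole support, the second part is empty and the argument still goes through.

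For the fixed-point claim, $\operatorname{supp}(p_\infty)=M_{\min}^{\pfwd}$, because $\pfwd>0$ there. On that set $E$ is constant, so $\pbwd$ is constant there as well, and Proposition~\ref{prop:fixed-point-char} yields $T[p_\infty]=p_\infty$. The only point needing care, which I expect to be the main (mild) obstacle, is bookkeeping: restricting every sum to $\operatorname{supp}(\pfwd)$, and using finiteness of $X$ to get a uniform gap $\delta$ so that the limit of the finite sum can be taken termwise.
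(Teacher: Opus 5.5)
Your proposal is correct and follows essentially the same route as the paper. You write $p_k(x)\propto \pfwd(x)e^{-k\beta E(x)}$, factor out the minimum-energy weight so the off-minimum ratios decay to zero, and then invoke Proposition~\ref{prop:fixed-point-char} for invariance of the limit, with your extra bookkeeping (restricting sums to $\operatorname{supp}(\pfwd)$, nonemptiness of $M_{\min}^{\pfwd}$, positivity of the limiting denominator) making explicit what the paper's proof leaves implicit.
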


\emph{Proof in Appendix~\ref{app:proofs}.}

\begin{remark}[Division of labour at the fixed point]
\label{rem:fixed-point}
At convergence, every surviving outcome carries the same backward energy $E_{\min}^{\pfwd}$, and the distribution of mass within $M_{\min}^{\pfwd}$ is determined by the forward boundary alone. The backward boundary selects the manifold of admissible outcomes; the forward boundary distributes mass within it. The one-shot operator $\pdbci$ takes a single step in this direction; iteration drives it to completion.
\end{remark}

\section{When the Second Boundary Adds No Distinctions Beyond the First}
\label{sec:scope}

The operator takes two arguments, but occupying the second argument is not sufficient to be a second source of case-specific information. This section characterises exactly when the effective backward boundary supplies case-specific information not determined by the forward boundary. The result is elementary. What it does is disqualify constructions that are two-boundary in form but supply only one source of case-specific information; Remark~\ref{rem:scope-limit} states what it does not show.

Throughout, fix a family of admissible cases $\Omega$, and write $P(\omega) = \mathcal{C}[\pfwd]$ for the normalised forward boundary, $B(\omega) = \pbwd$ for the backward weight, and $D(\omega) = \pdbci$ for the resulting operator output associated with a case $\omega \in \Omega$. Here $\mathcal{C}[w](x)=w(x)/\sum_y w(y)$. References to the forward boundary in this section mean this normalised distribution. The effective backward boundary defined below already removes the arbitrary scale of $B$. The informational classification below is always relative to $\Omega$: whether a given boundary varies is a fact about the family, not about the boundary in isolation. Unlike the material of \S\ref{sec:properties}, this section requires only the operator's minimal domain condition $Z > 0$ (Remark~\ref{rem:well-defined}); the strict positivity assumed there for the logarithmic and energy arguments is not used.

\begin{definition}[Effective backward boundary]
\label{def:effective-boundary}
By Remark~\ref{rem:rescaling} the operator depends on $\pbwd$ only up to positive rescaling. For a forward boundary $p$, write
\[
b \sim_p b' \quad\Longleftrightarrow\quad b(x) = c\,b'(x) \ \text{ for all } x \in \operatorname{supp}(p), \ \text{ some } c > 0,
\]
and let $\bar{B}(\omega)$ denote the class of $B(\omega)$ under $\sim_{P(\omega)}$. We call $\bar{B}$ the \emph{effective backward boundary}: it is precisely the part of $\pbwd$ the operator can see. The effective boundary is \emph{forward-determined} if $\bar{B}$ is constant on the fibres of $P$, that is, if any two cases with the same forward boundary have proportional backward weights on its support.
\end{definition}

On strictly positive weights over a fixed support, the algebra underlying the criterion below is standard. Van den Boogaart, Egozcue and Pawlowsky-Glahn~\cite{vandenBoogaart2010} identify densities modulo positive rescaling and define perturbation as their pointwise product, making perturbation by a fixed class an invertible group operation; the cancellation used in the proof is immediate in that setting, and the passage from it to a statement about a family of cases is the elementary factorisation property, that a map is constant on the fibres of another exactly when it factors through it. The quotient has older lineage still in likelihood theory: Fisher defines likelihood only up to proportionality~\cite[p.~310]{Fisher1922}, the likelihood principle treats proportional likelihoods as evidentially equivalent~\cite{Birnbaum1962}, and the converse, that likelihoods posterior-equivalent under every prior are proportional, is Claim~4 of Mayo-Wilson and Saraf~\cite{MayoWilsonSaraf2022}, in the form of their appendix restatement; their main-body statement conjoins support equivalence. What follows extends the formulation to nonnegative weights on the operator's minimal domain $Z > 0$, where supports may vary and a single Bayes space no longer contains all admissible cases, by quotienting the backward weight only on the support visible to the forward boundary. That extension is stratified rather than algebraic: for fixed $p$, the admissible backward weights decompose into ordinary positive Bayes spaces indexed by the zero pattern of $b$ on $\operatorname{supp}(p)$. The statement below is therefore offered not as a new cancellation law but as a family-level criterion for when the second boundary adds nothing the forward boundary does not already determine.

\begin{proposition}[Two-Boundary Reducibility Criterion]
\label{prop:reducibility}
$D$ is determined by $P$ alone on $\Omega$ if and only if the effective backward boundary $\bar{B}$ is forward-determined.
\end{proposition}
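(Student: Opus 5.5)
The plan is to reduce the family-level statement to a pointwise cancellation lemma and then apply the elementary factorisation principle: a map on $\Omega$ is determined by $P$ exactly when it is constant on the fibres of $P$. So ``$D$ is determined by $P$ alone'' means that $P(\omega)=P(\omega')$ implies $D(\omega)=D(\omega')$. By Definition~\ref{def:effective-boundary}, ``$\bar B$ is forward-determined'' means that $P(\omega)=P(\omega')$ implies $\bar B(\omega)=\bar B(\omega')$. It therefore suffices to prove the following. For a fixed forward distribution $p$ and two admissible backward weights $b,b'$ (each with $Z>0$ against $p$),
\[
\mathcal{C}[p\,b]=\mathcal{C}[p\,b'] \quad\Longleftrightarrow\quad b\sim_p b'.
\]
Rescaling $\pfwd$ to $P(\omega)$ does not change $D(\omega)$, by Remark~\ref{rem:rescaling}, so $D(\omega)=\mathcal{C}[P(\omega)B(\omega)]$ and the lemma applies directly.

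For the backward direction ($\Leftarrow$), suppose $b=c\,b'$ on $\operatorname{supp}(p)$. Off the support both products vanish. On the support, $p\,b=c\,p\,b'$, so the two normalised products coincide.

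For the forward direction ($\Rightarrow$), equality of the normalisations gives $p(x)b(x)=k\,p(x)b'(x)$ for all $x$, with $k=Z/Z'>0$. Admissibility guarantees both normalisers are positive. Dividing by $p(x)>0$ on $\operatorname{supp}(p)$ gives $b=k\,b'$ there, which is $b\sim_p b'$.

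The only point needing care is the minimal domain, where supports vary and zeros of $b$ may occur inside $\operatorname{supp}(p)$. The cancellation must divide only by $p$, never by $b$. The constant must come from the ratio of normalisers, not from a pointwise ratio $b/b'$ that might be $0/0$. With this handled, the zero patterns of $b$ and $b'$ on $\operatorname{supp}(p)$ are forced to agree automatically, and no stratification argument is needed.

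Assembling the pieces, take any $\omega,\omega'$ with $P(\omega)=P(\omega')=p$. By the lemma, $D(\omega)=D(\omega')$ holds iff $\bar B(\omega)=\bar B(\omega')$, where both classes are taken under the same relation $\sim_p$ because the forward boundaries coincide. Quantifying over all such pairs turns ``constant on fibres'' for $D$ into the same condition for $\bar B$. The factorisation principle then converts this into the stated equivalence. I expect no step to be a genuine obstacle; the main thing to get right is keeping the quotient relation indexed by the common $p$ on each fibre.
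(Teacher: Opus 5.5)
Your proposal is correct and follows essentially the same route as the paper. Fix a fibre $P(\omega)=P(\omega')=p$; for ($\Leftarrow$) the common positive constant cancels, and for ($\Rightarrow$) dividing only by $p(x)>0$ on $\operatorname{supp}(p)$ gives proportionality with constant equal to a ratio of the normalisers, which is then read as constancy on fibres. Your remark that $D(\omega)=\mathcal{C}[P(\omega)B(\omega)]$ by rescaling invariance makes explicit a step the paper leaves implicit, but it does not change the argument.
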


\begin{proof}
Let $\omega, \omega'$ satisfy $P(\omega) = P(\omega') = p$, and write $B = B(\omega)$, $B' = B(\omega')$, with normalisers $Z = \sum_y p(y)B(y) > 0$ and $Z' = \sum_y p(y)B'(y) > 0$.

($\Leftarrow$) If $B' = cB$ on $\operatorname{supp}(p)$ with $c > 0$, then $p(x)B'(x) = c\,p(x)B(x)$ for every $x \in \operatorname{supp}(p)$, and both sides vanish off the support; the factor $c$ cancels in the normalisation, so $D(\omega') = D(\omega)$.

($\Rightarrow$) If $D(\omega') = D(\omega)$ then for every $x \in \operatorname{supp}(p)$,
\[
\frac{p(x)B(x)}{Z} \;=\; \frac{p(x)B'(x)}{Z'},
\]
and dividing by $p(x) > 0$ gives $B'(x) = (Z'/Z)\,B(x)$ with $Z'/Z > 0$, so $B \sim_p B'$. No positivity of $B$ or $B'$ is required: the identity holds wherever either weight vanishes.

Hence $D$ is constant on the fibres of $P$ exactly when $\bar{B}$ is.
\end{proof}

\begin{remark}[Why the equivalence class is the right object]
\label{rem:effective-necessary}
Forward determination of $\pbwd$ itself is sufficient for factorisation but not necessary: two cases sharing a forward boundary may carry backward weights $b$ and $7b$, which differ, possibly for external reasons, while the operator output is identical. The rescaling invariance of Remark~\ref{rem:rescaling} is exactly the information the operator discards, and Definition~\ref{def:effective-boundary} quotients by it. Proposition~\ref{prop:reducibility} is therefore a characterisation rather than a sufficient condition, and it is the operator's own invariance that makes it one.
\end{remark}

\begin{remark}[Informational reading]
\label{rem:reducibility-information}
Equip $\Omega$ with any probability law for which the conditional mutual informations below are defined. Given $P$, the operator output $D$ and the effective backward boundary $\bar{B}$ determine one another: $D(x) \propto P(x)B(x)$ and, conversely, $B(x) \propto D(x)/P(x)$ on $\operatorname{supp}(P)$. Hence
\[
I(Y; D \mid P) \;=\; I(Y; \bar{B} \mid P)
\]
for every random variable $Y$ on the case space, so \dbci{} carries information about $Y$ beyond the forward boundary exactly to the extent that its effective backward boundary does. In particular, if $\bar{B}$ is forward-determined then $I(Y; \bar{B} \mid P) = I(Y; D \mid P) = 0$ for every such $Y$.

The information-theoretic content is standard~\cite{CoverThomas}; what is made explicit here for \dbci{} is the identification of $\bar{B}$ as the information quotient this operator induces. In the vocabulary of partial information decomposition~\cite{WilliamsBeer2010}, $I(Y; \bar{B} \mid P)$ is the information unique to $\bar{B}$ together with the synergy between the two boundaries; the consistency conditions of that framework pin the sum to the conditional mutual information, while its division into the two parts depends on the redundancy measure adopted, which Williams and Beer fix as $I_{\min}$ and later proposals vary.

Two directions must be kept apart. For a \emph{fixed} $Y$, $I(Y; \bar{B} \mid P) = 0$ does not imply forward determination: the effective boundary may vary freely and carry nothing relevant to that $Y$. Vanishing for \emph{every} random variable on the case space is a strictly stronger condition and implies it almost surely: taking $Y = \bar{B}$ leaves $\bar{B}$ conditionally determined by $P$ outside a null set of cases. When the law gives every case positive probability, as it can when $\Omega$ is countable, that null set is empty and universal informational irrelevance and forward determination coincide; otherwise the pointwise criterion is the stronger of the two. Single-target irrelevance is strictly weaker still. Conversely, failing to be forward-determined establishes only that an additional input exists, not that it is useful for any particular $Y$; relevance is a target-specific empirical question measured by $I(Y; \bar{B} \mid P)$.
\end{remark}

\begin{remark}[Relation to unique information]
\label{rem:pid}
The nearest condition in the partial-information literature is the vanishing of Bertschinger et al.'s unique-information measure~\cite{BertschingerEtAl2014} for one source relative to another, which holds exactly when the channel from the target to the first is a stochastic garbling of the channel to the second, and so, by Blackwell's theorem~\cite{Blackwell1953}, when the first offers no decision-theoretic advantage over the second. It is not the same condition: under a law giving every case positive probability, Proposition~\ref{prop:reducibility} corresponds to $I(Y; \bar{B} \mid P) = 0$ for every $Y$, which is unique information and synergy together, and is therefore the stronger requirement.
\end{remark}

\begin{remark}[Informational status of backward boundaries]
\label{rem:boundary-classification}
Relative to a family $\Omega$, effective backward boundaries have two statuses: forward-determined or not. The descriptions in \S\ref{sec:operator} and \S\ref{sec:discussion} do not by themselves establish which applies.

\begin{enumerate}[label=\textup{(\roman*)},noitemsep,topsep=2pt]
  \item \emph{Forward-determined.} The effective boundary adds no distinctions between cases sharing the same forward boundary. Two common ways this arises are \emph{forward-derived} weights, computed from $P=\mathcal{C}[\pfwd]$, and \emph{model-fixed} weights, held fixed across $\Omega$, such as a prior or structural constraint. These descriptions may overlap and do not exhaust this status: externally varying positive rescalings also leave the effective boundary unchanged. Bayes with a fixed prior is a model-fixed instance. A fixed boundary can reweight or exclude outcomes and change how outputs differ across cases. It cannot distinguish cases having the same forward boundary.
  \item \emph{Not forward-determined.} Two cases share the same forward boundary but have backward weights that are nonproportional on its support. A per-case post-selection, calibration record, code specification or weak-measurement readout has this status when it produces such a pair.
\end{enumerate}

The distinction concerns $\bar{B}$, not the raw weight $B$ or its provenance. Only the second status adds distinctions between cases sharing the same forward boundary; the informational reading is given in Remark~\ref{rem:reducibility-information}.

The distinction is between \emph{functional reducibility} and \emph{informational provenance}: a model-fixed boundary may encode substantial knowledge whose origin is not the forward channel (an informative prior is not empty) and still leave the rule a function of $\pfwd$, because that knowledge enters once, in the specification, rather than case by case.

Membership is relative to $\Omega$ and not intrinsic to the object. A post-selection state held fixed throughout an experiment is model-fixed. A varying code specification or per-case calibration record is not forward-determined only when cases sharing the same forward boundary have different effective backward boundaries; per-case variation alone does not establish this status. The criterion is silent on \emph{where} a non-forward-determined boundary's content comes from: case-varying noise and an empirical measurement are indistinguishable to it, and which of them obtains is settled by $I(Y; \bar{B} \mid P)$ and by the substrate, not here.
\end{remark}

\begin{remark}[What the criterion can and cannot show]
\label{rem:separating}
The criterion can disqualify a backward boundary as an additional source of case-specific information but cannot in general certify that a boundary is such a source. A boundary computed from the forward boundary, or held fixed across all cases, is forward-determined by construction, and that is visible from how it was specified without reference to any data. Showing that an effective boundary supplies non-forward-determined case-specific information is harder, and on many families it cannot be done with the criterion at all.

The condition is a statement about pairs: whenever two cases share a forward boundary, their effective backward boundaries agree. Testing it means finding such pairs and comparing them. But a universally quantified statement over an empty collection holds automatically, so if distinct cases carry distinct forward boundaries there are no pairs to check, every fibre of $P$ is a singleton, $\bar{B}$ is constant on each for that reason alone, and Proposition~\ref{prop:reducibility} reports that $D$ is determined by $P$. The report is correct, since $P$ then identifies the case and with it $B$ and $D$, but it turns on the uniqueness of the label rather than on anything about the backward boundary. The verdict is exact on this family: the effective second boundary adds no distinctions between cases sharing the same forward boundary. On a sample with no repeated forward boundaries, this does not establish a structural dependence in the underlying population. For the same reason the non-forward-determined class of Remark~\ref{rem:boundary-classification} is empty, since placing a boundary there requires a witnessing pair. As measured forward boundaries seldom repeat exactly, this is the ordinary situation rather than an odd one.

Two routes remain, and neither is free. The conditional mutual information of Remark~\ref{rem:reducibility-information} is the quantity an empirical study would estimate, but it requires a probability law on $\Omega$ under which $P$ does not itself determine the case; on an enumeration of distinct observed cases it vanishes for the same reason the criterion is trivially satisfied. Supplying such a law means modelling the population rather than listing the sample. Alternatively, replace $P$ by a specified summary $\widetilde{P}$ taking values in the probability simplex on $X$ and recompute $\widetilde{D} = \mathcal{C}[\widetilde{P} B]$, with positive overlap. The criterion then applies to the modified operator; merely grouping the original inputs does not preserve its equivalence.
\end{remark}

\begin{remark}[Relation to impossibility results]
\label{rem:impossibility}
Suppose $\bar{B}$ is forward-determined and the induced map $G \colon P \mapsto D$ is admissible in a given rule class (for instance, when the class contains the identity and the map $h$ below and is closed under the \dbci{} product). Then by Proposition~\ref{prop:reducibility} the composite rule lies in the class of rules depending on $\pfwd$ alone, and an impossibility result governing that class applies to it unchanged. Both conditions are needed: admissibility of a map $h$ with $\bar{B} = h \circ P$, whose existence forward determination supplies, does not by itself place $G$ in the class, since multiplication and renormalisation may leave it (a class of linear rules, for instance), and a formally defined but non-admissible map --- a non-computable one, where the result quantifies over computable procedures --- would smuggle in computational power or structure the class excludes.

Two cautions. Membership must be checked against the relevant result's own hypotheses; a rule is not subject to a given impossibility theorem merely by depending on one argument. And the criterion does not establish that a case-varying boundary escapes any particular impossibility, nor that additional information is necessary to do so; relaxing one of the result's hypotheses remains an alternative route. What it establishes is only that a forward-determined boundary is not an informational escape.
\end{remark}

\begin{remark}[What is not claimed]
\label{rem:scope-limit}
Proposition~\ref{prop:reducibility} does not show that a forward channel is insufficient for any target. That $\pfwd$ is insufficient for $Y$ is a substrate-specific claim, established where it holds by the structure of the substrate or by measurement, and it is not established here.

Constancy within a history differs from constancy across histories. For histories $h$, let $P_{t}(h)$ and $B_{t}(h)$ be boundary weights with positive overlap, and set $D_{t}(h) = \mathcal{C}[P_{t}(h)B_{t}(h)]$. At each fixed $t$, Proposition~\ref{prop:reducibility} applies unchanged. A boundary retained within each history may distinguish histories whose current forward states coincide. A boundary shared by all histories adds no such distinctions, yet can continue to reweight an evolving forward state, as the iteration of \S\ref{sec:properties} demonstrates. The criterion tests determination by the current forward boundary; it does not assess the causal benefit of retaining the boundary.
\end{remark}

\begin{remark}[What the criterion is for]
\label{rem:criterion-purpose}
The distinction drawn here has usually been made informally. A second boundary is called second because it is specified separately, or because it refers to a later time, and neither is a statement about information. Proposition~\ref{prop:reducibility} gives an exact test: over the case family, the effective second boundary adds no distinctions between cases sharing the same forward boundary exactly when the output adds none. Remark~\ref{rem:reducibility-information} gives the corresponding informational reading and its conditions. This is what makes the reading offered in \S\ref{sec:introduction} checkable rather than rhetorical. When $\pfwd$ specifies the complete initial state under fixed deterministic dynamics and the backward boundary represents the resulting terminal state, it adds no distinctions between cases sharing the same forward boundary; post-selected quantum mechanics is a physical setting in which it need not be, even when the forward state is completely specified: the maximal description there is the two-state vector, and the post-selection yields information additional to the forward-evolving state~\cite{AharonovVaidman2007}. The criterion applies in any substrate, subject to Remark~\ref{rem:separating}: a construction that computes the backward boundary from the forward one, or holds it fixed across all cases, is ruled out by inspection, while showing that a boundary supplies non-forward-determined case-specific information requires either recurring forward boundaries or the informational test.
\end{remark}

\section{Discussion and Scope}
\label{sec:discussion}

The operator is defined for any pair $(\pfwd, \pbwd)$ of nonnegative boundary weights on a finite outcome space with overlapping support. Its substrate does not enter the formula. When logarithms, KL divergences, or exponential-family parameterisations are used, we additionally assume strict positivity on the relevant outcome space, as in Remark~\ref{rem:well-defined}. The forward boundary \(\pfwd\) might be a squared-amplitude distribution, a likelihood, or a generative-model output; the backward boundary \(\pbwd\) might be the diagonal of a post-selection effect, a code-subspace weighting, a weak-measurement readout represented as a positive likelihood over the outcome space, or a Bayesian prior. What the operator requires is the form of these inputs (two boundary weights on a shared outcome space), not their physical origin. They differ, however, in informational status, which Remark~\ref{rem:boundary-classification} sorts.

The iteration properties of \S\ref{sec:properties} are accordingly consequences that hold in any system meeting that form, whatever its substrate.

\textbf{Born rule.} When $\pbwd$ is uniform, the operator reduces to $\pdbci(x) = \pfwd(x)$. Reading $\pfwd(x) = |\psi(x)|^{2}$ as the squared-amplitude distribution of a pre-selected state $|\psi\rangle$ then recovers the Born rule, the single-boundary case in which the backward boundary carries no information. This is not a derivation of the Born rule --- the squared-amplitude form of $\pfwd$ is itself assumed (it is the Born rule for the forward state) --- but a recovery of it in the limit where $\pbwd$ is uninformative. Deriving the rule rather than assuming it requires premises of a different kind: the decision-theoretic programme of Deutsch and Wallace~\cite{Deutsch1999,Wallace2012} obtains the Born weights from constraints on the preferences of a rational agent, and nothing of that kind is attempted here. The same reduction, obtained there by summing over final states to express indifference to the final condition, is noted by Aharonov, Cohen, Gruss and Landsberger~\cite{AharonovCohenGrussLandsberger2014}.

\textbf{Bayes' rule.} Reading $\pbwd$ as a prior $\pi(x)$ and $\pfwd$ as proportional to the likelihood $p(y \mid x)$ of an observation $y$, the operator gives
\begin{equation*}
\pdbci(x) \;=\; \frac{p(y \mid x)\,\pi(x)}{\sum_{x'} p(y \mid x')\,\pi(x')} \;=\; \pi(x \mid y),
\end{equation*}
which is Bayes' rule; rescaling invariance (Remark~\ref{rem:rescaling}) makes the choice of likelihood normalisation irrelevant. The recovery is interpretive, since the operator's premises do not single out which input is the prior. For the rank-one L\"uders measurement, the forward Born probabilities give $P(n)$, and the backward Born weights give the likelihood $P(F \mid n)$ of the final event $F$. The ABL distribution is therefore $P_{\mathrm{ABL}}(n)=P(n \mid F)$, with the factors supplied by the quantum measurement model.

\textbf{Free-energy view.} Taking the negative logarithm of $\pdbci$,
\begin{equation*}
-\log\pdbci(x) \;=\; \bigl[-\log\pfwd(x)\bigr] + \bigl[-\log\pbwd(x)\bigr] + \log Z,
\end{equation*}
the operator is the Boltzmann distribution of a system whose total energy $E(x) = -\log\pfwd(x) - \log\pbwd(x)$ is the sum of a forward and a backward term, with partition function $Z$ and MAP estimate the minimum-total-energy configuration. This framing makes the exponential-family parameterisation of \S\ref{sec:properties} natural rather than ad hoc: any strictly positive $\pbwd$ is already a Boltzmann distribution in some $E$, and the operator combines two Boltzmann factors into one.

\textbf{Two-state vector formalism.} The rank-one reduction of ABL of \S\ref{sec:abl} is the operator's connection to the two-state vector formalism (TSVF) and to the time-symmetric reading of quantum mechanics developed in subsequent work~\cite{AharonovCohenLandsberger2017}. The TSVF originates with Aharonov, Bergmann and Lebowitz~\cite{AharonovBergmannLebowitz1964} and is reviewed by Aharonov and Vaidman~\cite{AharonovVaidman2007}. In TSVF, ABL is the operational rule for intermediate projective measurements between pre- and post-selected states; the operator is the form that rule takes when the intermediate measurement is rank-one. Weak measurement~\cite{AharonovCohenElitzur2014}, the operational tool most associated with TSVF, is a distinct experimental regime: the present paper concerns the projective-measurement rule on which weak measurement is built, not weak measurement itself. A separate line of work argues that the classical macroscopic regime itself arises from weak measurement over many degrees of freedom together with a final boundary condition, macroscopic averages behaving as deterministic operators in the large-ensemble limit~\cite{CohenAharonov2017}. Whether that route delivers the present operator specifically, as opposed to classicality in general, is open, and we do not rely on it here.

\medskip
\noindent Three limitations concern what the operator alone establishes:

\medskip
\textbf{Joint distributions.} When a joint model is specified, the posterior $P(x \mid y_1, y_2)$ is determined by standard conditioning. The operator reproduces it when its inputs are appropriate factors of that posterior. Multiplying arbitrary conditional distributions can double-count shared information or omit dependencies.

\textbf{Task-specific loss.} The optimal decision depends on the loss function. When $\pdbci$ represents the posterior, it is obtained by minimising posterior expected loss. MAP is optimal for zero--one loss on individual outcomes, while other losses generally require different decisions from the same distribution. The KL minimisation in \S\ref{sec:kl} is a separate distributional problem whose solution is the geometric mean.

\textbf{Latent-variable inference.} The operator is not an alternative to expectation--maximisation or to variational inference. Those procedures iteratively refine an estimate of a latent posterior from data; the operator combines two already-specified boundary distributions. The two problems are different.

Within the stated scope, $\pdbci$ is the unit-exponent member of the maximum-entropy family, selected by the reading of its inputs as separately applied factors and returned independently by the rank-one ABL reduction (\S\ref{sec:unification}); a supplied joint model determines whether this product is the appropriate posterior, while decision losses and latent-inference procedures require further specification.

\section{Conclusion}
\label{sec:conclusion}

\looseness=-1
The \dbci{} operator $\pdbci(x)=\pfwd(x)\pbwd(x)/Z$ is the unit-exponent member of a one-parameter family. Which member a setting calls for is fixed not by the algebra but by what the two inputs are taken to be: separately applied factors, or two equally weighted opinions about the same outcome. Each reading is supported in the same three ways: an identity property, a covariance property and an independent derivation, which on the factor side is the rank-one reduction of ABL. The disagreement between the two readings is invisible to pointwise ranking and to MAP, but not to a rule that scores a class by the mass assigned to it. Under iteration the operator behaves as annealing in the boundary strength: the backward boundary selects a minimum-energy manifold, and the forward boundary distributes mass within it. The Two-Boundary Reducibility Criterion characterises when the effective second argument adds no distinctions between cases sharing the same forward boundary. \dbci{} factors through the forward boundary exactly when the effective backward boundary does. A fixed prior or fixed structural constraint can therefore materially shape every answer without adding such distinctions. Whether a boundary that is not forward-determined is relevant to a given target is a substrate-specific empirical question, and it is there that the two-boundary reading has to show that the questions it suggests lead somewhere.

\section*{Acknowledgements}
\noindent This work was co-funded by the European Union under the Horizon Europe project EuropeanCity\textsuperscript{2} (grant agreement No~101178170). Views and opinions expressed are however those of the author(s) only and do not necessarily reflect those of the European Union or the European Research Executive Agency (REA). Neither the European Union nor the granting authority can be held responsible for them.

\clearpage
\appendix

\begin{center}
{\huge\bfseries Appendices}
\end{center}
\vspace{1.5em}

\section{Proofs for Sections~\ref{sec:maxent}, \ref{sec:kl}, and~\ref{sec:properties}}
\label{app:proofs}

\emph{Proof of Theorem~\ref{thm:maxent}.}
Form the constrained Lagrangian
\begin{equation*}
\mathcal{L}(W) \;=\; -\sum_x W(x)\log W(x) \;+\; \lambda\!\left(\sum_x W(x)\, g(x) - m\right) \;+\; \mu\!\left(\sum_x W(x) - 1\right).
\end{equation*}
Stationarity $\partial\mathcal{L}/\partial W(x) = 0$ yields
\begin{equation*}
-\log W(x) - 1 + \lambda\, g(x) + \mu \;=\; 0,
\end{equation*}
so $W(x) \propto e^{\lambda\, g(x)} = \bigl(\pfwd(x)\,\pbwd(x)\bigr)^{\lambda}$. Normalisation fixes the proportionality constant and gives~(\ref{eq:maxent-form}).

That a suitable $\lambda$ exists, is unique, and ranges over all of $\mathbb{R}$ is established as follows. Assume $g$ non-constant and set
\begin{equation*}
Z(\lambda) \;=\; \sum_x e^{\lambda\, g(x)}, \qquad m(\lambda) \;=\; \mathbb{E}_{W^{*}_{\lambda}}[g] \;=\; \frac{\mathrm{d}}{\mathrm{d}\lambda}\log Z(\lambda),
\end{equation*}
a standard exponential-family computation. Differentiating once more,
\begin{equation*}
\frac{\mathrm{d}m}{\mathrm{d}\lambda} \;=\; \operatorname{Var}_{W^{*}_{\lambda}}(g) \;>\; 0,
\end{equation*}
strictly, because $W^{*}_{\lambda}(x) > 0$ for every $x$ and $g$ is non-constant. Hence $m(\lambda)$ is continuous and strictly increasing. As $\lambda \to +\infty$ the weights $e^{\lambda g(x)}$ concentrate on $\arg\max_x g$, so $m(\lambda) \to \max_x g(x)$; symmetrically $m(\lambda) \to \min_x g(x)$ as $\lambda \to -\infty$. Therefore $\lambda \mapsto m(\lambda)$ is a strictly increasing bijection from $\mathbb{R}$ onto the open interval $(\min_x g, \max_x g)$, which is exactly the interior of the convex hull of $\{g(x)\}$. Every interior $m$ thus determines a unique finite $\lambda$.

This also closes the maximisation over the whole feasible set. The $W^{*}_{\lambda}$ so obtained is strictly positive and feasible, and satisfies the stationarity conditions above; since $H$ is strictly concave and the feasible set (the simplex intersected with the moment hyperplane) is convex, a feasible stationary point of a strictly concave function is its unique global maximiser, boundary points of the simplex included. Substituting $\lambda = 1$ into~(\ref{eq:maxent-form}) gives the multiplicative form $\pfwd(x)\,\pbwd(x)/Z$, identifying $W^{*}_{1}$ with $\pdbci$.

The argument above covers the interior case, in which the moment constraint is met at a finite $\lambda$. At the extreme feasible value $m = \max_x g(x)$ it is not: since $g(x) \le m$ for every $x$, the constraint $\sum_x W(x)\,g(x) = m$ forces $W$ to vanish off the level set $L_{\max} = \{x : g(x) = m\}$, and every distribution supported on $L_{\max}$ is feasible. Entropy on that face of the simplex is maximised by the uniform distribution on $L_{\max}$, which is $\lim_{\lambda \to +\infty} W^{*}_{\lambda}$, since $W^{*}_{\lambda}$ places weight $\propto e^{\lambda g(x)}$ and the ratio of any off-$L_{\max}$ weight to an on-$L_{\max}$ weight tends to zero. The case $m = \min_x g(x)$ is symmetric, with $\lambda \to -\infty$.

\medskip
\emph{Proof of Lemma~\ref{lem:kl-min}.}
Expanding (\ref{eq:kl-sym}) and collecting the two logarithms,
\begin{equation*}
D_{\mathrm{sym}}(p')
\;=\; \sum_x p'(x)\ln p'(x) \;-\; \tfrac{1}{2}\sum_x p'(x)\ln\bigl(\pfwd(x)\,\pbwd(x)\bigr)
\;=\; \sum_x p'(x)\,\ln\frac{p'(x)}{\sqrt{\pfwd(x)\,\pbwd(x)}}.
\end{equation*}
Writing $C = \sum_y \sqrt{\pfwd(y)\,\pbwd(y)}$, so that $\sqrt{\pfwd(x)\,\pbwd(x)} = C\,p'_{\mathrm{geo}}(x)$ with $p'_{\mathrm{geo}}$ as in (\ref{eq:kl-min}), this is the identity
\begin{equation*}
D_{\mathrm{sym}}(p') \;=\; D_{\mathrm{KL}}\bigl(p' \,\|\, p'_{\mathrm{geo}}\bigr) \;-\; \ln C .
\end{equation*}
The second term does not depend on $p'$, so minimising $D_{\mathrm{sym}}$ is minimising $D_{\mathrm{KL}}(p' \| p'_{\mathrm{geo}})$, which is non-negative and vanishes precisely at $p' = p'_{\mathrm{geo}}$. Since $\pfwd$ and $\pbwd$ are strictly positive, $p'_{\mathrm{geo}}$ is strictly positive and the divergence is finite for every $p'$; the argument therefore holds over the whole simplex, boundary included, and requires no differentiation.

\medskip
\emph{Proof of Proposition~\ref{thm:kl-opt}.}
For $\lambda > 0$ the map $t \mapsto t^{\lambda}$ is strictly increasing on $[0,\infty)$, so $W^{*}_{\lambda}(x) \propto (\pfwd(x)\,\pbwd(x))^{\lambda}$ is a strictly increasing function of $\pfwd(x)\,\pbwd(x)$, and normalisation by the positive constant $Z_{\lambda}$ preserves order. Every member of the family therefore induces the same ordering of $X$ as the product itself, and in particular the same maximiser set, with ties on one side ties on the other. Taking $\lambda = 1$ and $\lambda = \tfrac{1}{2}$ gives the stated case for $\pdbci$ and, by Lemma~\ref{lem:kl-min}, for $p'_{\mathrm{geo}}$.

\medskip
\emph{Proof of Proposition~\ref{prop:fixed-point-char}.}
Write $Z_p = \sum_y p(y)\,\pbwd(y)$, which is strictly positive since $\pbwd > 0$, so $T[p]$ is defined.

($\Leftarrow$) Suppose $\pbwd(x) = c$ for some constant $c > 0$ at every $x \in \operatorname{supp}(p)$. Then $Z_p = c$ by normalisation of $p$, and substituting into (\ref{eq:T-operator}) gives
\begin{equation*}
T[p](x) \;=\; \frac{p(x)\,c}{c} \;=\; p(x)
\end{equation*}
at every $x \in \operatorname{supp}(p)$, while both sides vanish elsewhere.

($\Rightarrow$) Suppose $T[p] = p$. For $x \in \operatorname{supp}(p)$ this reads $p(x)\,\pbwd(x)/Z_p = p(x)$, and dividing by $p(x) > 0$ gives $\pbwd(x) = Z_p$. The value $Z_p$ does not depend on $x$, so $\pbwd$ is constant on $\operatorname{supp}(p)$.

\medskip
\emph{Proof of Proposition~\ref{prop:fixed-point}.}
A direct calculation gives $p_k(x) \propto \pfwd(x)\,\pbwd(x)^{k} \propto \pfwd(x)\,e^{-k\beta E(x)}$. For $x \in M_{\min}^{\pfwd}$, the factor $e^{-k\beta E_{\min}^{\pfwd}}$ is common and cancels under normalisation; for $x \in \operatorname{supp}(\pfwd) \setminus M_{\min}^{\pfwd}$, the ratio $e^{-k\beta(E(x) - E_{\min}^{\pfwd})}$ vanishes as $k \to \infty$ since $E(x) > E_{\min}^{\pfwd}$ and $\beta > 0$. The limiting distribution is therefore supported on $M_{\min}^{\pfwd}$ with weights proportional to $\pfwd$, giving (\ref{eq:fixed-point-limit}). Fixed-point invariance is then immediate: every $x \in M_{\min}^{\pfwd} = \operatorname{supp}(p_\infty)$ carries the same energy $E_{\min}^{\pfwd}$, so $\pbwd$ is constant on $\operatorname{supp}(p_\infty)$, and Proposition~\ref{prop:fixed-point-char} gives $T[p_\infty] = p_\infty$.

\bibliographystyle{unsrt}
\bibliography{references}

\end{document}